\newtheorem{theorem}{Theorem} [section]
\newtheorem{corollary}[theorem]{Corollary}	
\newtheorem{lemma}[theorem]{Lemma}
\newtheorem{remark}[theorem]{Remark}
\theoremstyle{definition}
\newcommand{\C}{\mathbb{C}}
\newcommand{\B}{\mathrm{B}}
\newcommand{\D}{\mathbb{D}}
\newcommand{\N}{\mathbb{N}}
\tikzset{
	master/.style={
		execute at end picture={
			\coordinate (lower right) at (current bounding box.south east);
			\coordinate (upper left) at (current bounding box.north west);
		}
	},
	slave/.style={
		execute at end picture={
			\pgfresetboundingbox
			\path (upper left) rectangle (lower right);
		}
	}
}
\let\oldbibliography\thebibliography
\renewcommand{\thebibliography}[1]{\oldbibliography{#1}
\setlength{\itemsep}{-0.5pt}}
\def\XXint#1#2#3{{\setbox0=\hbox{$#1{#2#3}{\int}$}
\vcenter{\hbox{$#2#3$}}\kern-.5\wd0}}
\tikzset{->-/.style={decoration={
				markings,
				mark=at position #1 with {\arrow{latex}}},postaction={decorate}}}
	\tikzset{-<-/.style={decoration={
				markings,
				mark=at position #1 with {\arrowreversed{latex}}},postaction={decorate}}}
\tikzset{cross/.style={cross out, draw, 
         minimum size=2*(#1-\pgflinewidth), 
         inner sep=0pt, outer sep=0pt}}
\numberwithin{equation}{section}
\def\ds{\displaystyle}
\def\bigO{{\cal O}}
\begin{document}
\title{\vspace*{-1.5cm} Eigenvalues of truncated unitary matrices: \\ disk counting statistics}
\author{Yacin Ameur$^*$, Christophe Charlier\footnote{Centre for Mathematical Sciences, Lund University, 22100 Lund, Sweden. e-mails: yacin.ameur@math.lu.se,  christophe.charlier@math.lu.se, joakim.cronvall@math.lu.se
} \, and Philippe Moreillon\footnote{Section of Mathematics, University of Geneva, 1205 Geneva, Switzerland. e-mail: philippe.moreillon@unige.ch}}

\maketitle

\begin{abstract}
Let $T$ be an $n\times n$ truncation of an $(n+\alpha)\times (n+\alpha)$ Haar distributed unitary matrix. We consider the disk counting statistics of the eigenvalues of $T$. We prove that as $n\to + \infty$ with $\alpha$ fixed, the associated moment generating function enjoys asymptotics of the form
\begin{align*}
\exp \big( C_{1} n + C_{2} + o(1) \big),
\end{align*}
where the constants $C_{1}$ and $C_{2}$ are given in terms of the incomplete Gamma function. Our proof uses the uniform asymptotics of the incomplete Beta function. 
\end{abstract}
\noindent
{\small{\sc AMS Subject Classification (2020)}: 41A60, 60B20, 60G55.}

\noindent
{\small{\sc Keywords}: Moment generating functions, Random matrix theory.}

\section{Introduction}
Let $n \in \N_{>0}$, $\alpha>0$, and consider the joint probability measure
\begin{align}\label{def of point process hard}
\frac{1}{n!Z_{n}}\prod_{1 \leq j<k \leq n}|z_{k}-z_{j}|^{2}  \prod_{j=1}^{n} (1-|z_{j}|^{2})^{\alpha-1} d^{2}z_{j}, \qquad |z_{j}| \leq 1,
\end{align}
where $Z_{n}$ is the normalization constant. Note that the $z_{j}$'s are constrained to lie in the unit disk $\D:=\{z\in \C:|z|\leq 1\}$. A main motivation for studying this point process stems from its connection with random matrices: it is shown in \cite{ZS2000} that for $\alpha \in \mathbb{N}_{>0}$, \eqref{def of point process hard} is the law of the eigenvalues of an $n\times n$ truncated unitary matrix $T$, i.e. $T$ is the upper-left $n\times n$ submatrix of a Haar distributed unitary matrix of size $(n+\alpha)\times (n+\alpha)$. By rewriting \eqref{def of point process hard} in the form
\begin{align*}
\frac{1}{n!Z_{n}}\prod_{1 \leq j<k \leq n}|z_{k}-z_{j}|^{2}  \prod_{j=1}^{n} e^{-nQ(z_{j})} d^{2}z_{j}, \qquad Q(z) := \begin{cases}
-\frac{\alpha-1}{n}\ln(1-|z|^{2}), & \mbox{if } |z|< 1, \\
+\infty, & \mbox{if } |z| \geq 1,
\end{cases}
\end{align*}
we infer that for general $\alpha>0$ (not necessarily $\alpha \in \N_{>0}$), \eqref{def of point process hard} is also the law of a Coulomb gas with $n$ particles at inverse temperature $\beta = 2$ associated with the potential $Q$ \cite{Fo}. 

\medskip We emphasize that \eqref{def of point process hard} is a probability measure only for $\alpha>0$. If $\alpha=0$, the above matrix $T$ is an $n\times n$ Haar distributed unitary matrix, for which the $n$ eigenvalues $\mathrm{z}_{1},\ldots,\mathrm{z}_{n}$ lie exactly on the unit circle according to the probability measure proportional to
\begin{align}\label{CUE density}
\prod_{1 \leq j<k \leq n}|\mathrm{z}_{k}-\mathrm{z}_{j}|^{2}  \prod_{j=1}^{n} d\theta_{j}, \qquad \mathrm{z}_{j} = e^{i\theta_{j}}, \; \theta_{j}\in [0,2\pi).
\end{align}
As is well-known, the equilibrium measure associated with \eqref{CUE density} is the uniform measure on the unit circle.

\medskip This work focuses on the point process \eqref{def of point process hard} as $n\to +\infty$ with $\alpha>0$ fixed. In this regime, $Q(z) = \bigO(n^{-1})$ for any fixed $z\in \D$, and the associated equilibrium measure $\mu$ is defined as the unique measure minimizing the following energy functional
\begin{align*}
\nu \mapsto I[\nu] = \iint \ln \frac{1}{|z-w|}\nu(d^{2}z)\nu(d^{2}w)
\end{align*}
among all Borel probability measures $\nu$ supported on $\D$. This problem is a so-called classical (or unweighted) electrostatics problem, and as such the support of $\mu$ must be the boundary of $\D$ \cite{SaTo} (and this, despite the fact that the point process \eqref{def of point process hard} is two-dimensional). Because the density of \eqref{def of point process hard} is invariant under rotation, we conclude that $\mu$ is the uniform measure on the unit circle $\partial \D$.

\begin{figure}
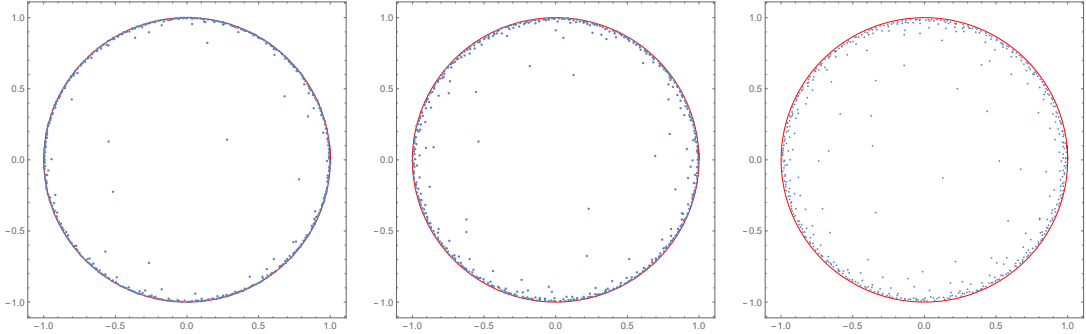

\begin{center}
\begin{tikzpicture}[master]
\node at (0,0) {\includegraphics[width=4.5cm]{a=2}};
\end{tikzpicture}
\begin{tikzpicture}[master]
\node at (0,0) {\includegraphics[width=4.5cm]{a=5}};
\end{tikzpicture}
\begin{tikzpicture}[master]
\node at (0,0) {\includegraphics[width=4.5cm]{a=10}};
\end{tikzpicture}
\end{center}
\caption{\label{fig:TUE} Illustration of the point process \eqref{def of point process hard} with $n=500$ and $\alpha=2$ (left), $\alpha=5$ (middle) and $\alpha=10$ (right). The unit circle is represented in red.}
\end{figure}

\medskip In the language of random matrix theory, $\partial \D$ is a ``hard wall" of \eqref{def of point process hard}. For two-dimensional Coulomb gases, it is a standard fact that along a hard wall the equilibrium measure is singular with respect to the two-dimensional Lebesgue measure; moreover, a non-zero percentage of the points are expected to accumulate (as $n \to + \infty$) in a very small interface of width $1/n$ around the hard wall, see e.g. \cite{SaTo, Seo, ACCL2022}. Following \cite{ACCL2022}, we call this small interface ``the hard edge regime". A particular feature of \eqref{def of point process hard} is that the associated equilibrium measure $\mu$ is purely singular (i.e. $\mu$ has no absolutely continuous component and $\int_{\partial \D}  d\mu =1$). Hence, for large $n$ and $\alpha$ fixed, most of the points $z_{1},\ldots,z_{n}$ of \eqref{def of point process hard} are expected to lie in a $1/n$-neighborhood of $\partial \D$, see also Figure \ref{fig:TUE}. 

\medskip There already exists a fairly rich literature on truncated unitary matrices. For example, the convergence of the distribution of the
maximal modulus $\max_{j}|z_{j}|$ to the Weibull distribution has been studied in \cite{GuiQi2018, LGMS2018, Seo}, several characterizations in terms of Painlevé transcendents for expectations of powers of the characteristic polynomial are established in \cite{DeanoSimm}, and results on the eigenvectors can be found in \cite{Dubach 1}. Also, besides \eqref{def of point process hard}, other two-dimensional point processes whose points are distributed within a narrow interface (or ``band") have been considered, see e.g. \cite{FSK1998, AB2023, BS2021}; however, the point processes considered in these works only feature ``soft edges", and are thus very different from \eqref{def of point process hard}.

\medskip Let $\mathrm{N}(y):=\#\{z_{j}: |z_{j}| < y\}$ be the random variable that counts the number of points of \eqref{def of point process hard} in the disk centered at $0$ of radius $y$. The goal of this paper is to understand the large $n$ behavior of the multivariate moment generating function
\begin{align}\label{moment generating function intro}
\mathbb{E}\bigg[ \prod_{j=1}^{m} e^{u_{j}\mathrm{N}(r_{j})} \bigg]
\end{align}
where $m \in \mathbb{N}_{>0}$ is arbitrary (but fixed), $u_{1},\dots,u_{m} \in \mathbb{R}$ and $r_{1} < \dots <r_{m}$. We consider the hard edge regime, i.e. the radii $r_{1},\dots,r_{m}$ are merging near $1$ at the critical speed $1-r_{j} \asymp n^{-1}$ (we also allow $r_{m}=1$), see also Figure \ref{fig:merging disks}. More precisely, we define
\begin{align}
& r_{\ell} = \Big( 1-\frac{t_{\ell}}{n} \Big)^{1/2}, & & t_{1} > \ldots > t_{m} \geq 0. \label{def of rell part1}
\end{align}
Note that if $t_{m}=0$, then $r_{m}=1$ and trivially $\mathrm{N}(r_{m})=n$ with probability one.
\begin{figure}[h!]
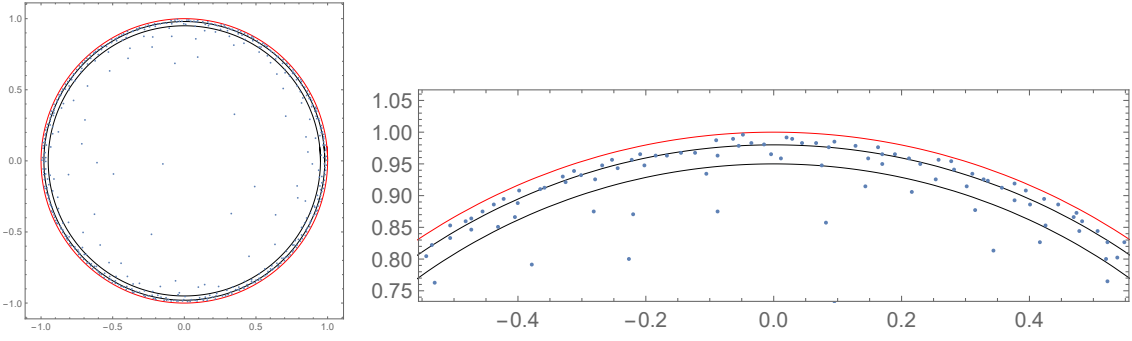

\begin{center}
\begin{tikzpicture}[master]
\node at (0,0) {\includegraphics[width=4.5cm]{a=10_m=2}};
\end{tikzpicture}
\begin{tikzpicture}[master]
\node at (0,0) {\includegraphics[width=10cm]{a=10_m=2_zoom}};
\end{tikzpicture}
\end{center}
\caption{\label{fig:merging disks} Left: two circles (in black) merging near the unit circle (in red). Right: a zoom is taken around $i$. For both pictures, $\alpha=10$ and $n=500$.}
\end{figure}

In Theorem \ref{thm:main thm hard} below, we prove that 
\begin{align*}
\mathbb{E}\bigg[ \prod_{j=1}^{m} e^{u_{j}\mathrm{N}(r_{j})} \bigg] = \exp \bigg( C_{1} n + C_{2} + o(1) \bigg), \qquad \mbox{as } n \to + \infty,
\end{align*}
and we give explicit expressions for the constants $C_{1}$ and $C_{2}$ in terms of the following functions
\begin{align}
& \mathcal{H}_{\alpha}(x;\vec{t},\vec{u}) := 1 + \sum_{\ell=1}^{m} (e^{u_{\ell}}-1)\exp \bigg[ \sum_{j=\ell+1}^{m}u_{j} \bigg] \mathrm{Q}(\alpha,t_{\ell}x), \label{def of H} \\
& \mathcal{G}_{\alpha}(x;\vec{t},\vec{u}) := \frac{1}{\mathcal{H}_{\alpha}(x;\vec{t},\vec{u})} \frac{x^{\alpha-1}}{\Gamma(\alpha)} \sum_{\ell=1}^{m} (e^{u_{\ell}}-1)\exp \bigg[ \sum_{j=\ell+1}^{m}u_{j} \bigg] t_{\ell}^{\alpha}e^{-t_{\ell}x}\frac{1-\alpha-t_{\ell}x}{2}, \label{def of G}
\end{align}
where $x\in (0,1]$, $\vec{u}=(u_{1},\ldots,u_{m})\in \C^{m}$, $\vec{t}=(t_{1},\ldots,t_{m})$ is such that $t_{1}>\ldots>t_{m} \geq 0$, $\Gamma(a):=\int_{0}^{+\infty} s^{a-1}e^{-s}ds$ is the Gamma function, and $\mathrm{Q}$ is the normalized incomplete Gamma function:
\begin{align}\label{def of Q}
\mathrm{Q}(a,x) := \frac{\Gamma(a,x)}{\Gamma(a)}, \qquad \Gamma(a,x) := \int_{x}^{+\infty} s^{a-1}e^{-s}ds.
\end{align}
Note that $\mathcal{H}_{\alpha}$ is also well-defined at $x=0$, while $\mathcal{G}_{\alpha}$ is well-defined at $x=0$ only for $\alpha \geq 1$. 

\medskip The statement of our main theorem involves $\ln \mathcal{H}_{\alpha}$, and the function $\mathcal{H}_{\alpha}$ also appears in the denominator of \eqref{def of G}. The following lemma implies that $\ln \mathcal{H}_{\alpha}$ and $\mathcal{G}_{\alpha}$ are well-defined and real-valued for $x \in (0,1]$, $\vec{u}=(u_{1},\ldots,u_{m}) \in \mathbb{R}^{m}$, and $t_{1}>\ldots>t_{m} \geq 0$. In this paper, $\ln$ always denotes the principal branch of the logarithm. 
\begin{lemma}\label{Hjpositivelemma}
$\mathcal{H}_{\alpha}(x;\vec{t},\vec{u})>0$ for all $x \in (0,1]$, $\vec{u}=(u_{1},\ldots,u_{m}) \in \mathbb{R}^{m}$, $t_{1}>\ldots>t_{m} \geq 0$. 
\end{lemma}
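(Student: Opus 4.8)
The plan is to rewrite $\mathcal{H}_{\alpha}$ via a telescoping (summation by parts) argument so that it becomes a sum of manifestly nonnegative terms at least one of which is strictly positive. Set $q_{\ell} := \mathrm{Q}(\alpha, t_{\ell}x)$ and $v_{\ell} := \exp\big[\sum_{j=\ell+1}^{m} u_{j}\big]$ for $\ell = 0,1,\dots,m$, with the usual convention that an empty sum equals $0$, so that $v_{m} = 1$ and $v_{\ell-1} = e^{u_{\ell}}v_{\ell}$. Then the coefficient of $q_{\ell}$ in \eqref{def of H} equals $(e^{u_{\ell}}-1)v_{\ell} = v_{\ell-1}-v_{\ell}$, and hence
\begin{align*}
\mathcal{H}_{\alpha}(x;\vec{t},\vec{u}) = 1 + \sum_{\ell=1}^{m} (v_{\ell-1}-v_{\ell})\,q_{\ell}.
\end{align*}

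Next I would record the elementary facts needed about the two families of numbers. Each $v_{\ell}$ is positive, being the exponential of a real number. The function $y \mapsto \mathrm{Q}(\alpha,y)$ satisfies $\mathrm{Q}(\alpha,0) = 1$, is continuous on $[0,+\infty)$, and, by \eqref{def of Q}, has derivative $-y^{\alpha-1}e^{-y}/\Gamma(\alpha) < 0$ for $y > 0$; thus it is strictly decreasing and takes values in $(0,1]$. Since $x > 0$ and $t_{1} > \dots > t_{m} \geq 0$, this gives $0 < q_{1} \leq q_{2} \leq \dots \leq q_{m} \leq 1$ (the middle inequalities are in fact strict, though this is not needed), with $q_{m} = 1$ precisely when $t_{m} = 0$.

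Then I would apply summation by parts. With $A_{k} := \sum_{\ell=1}^{k}(v_{\ell-1}-v_{\ell}) = v_{0}-v_{k}$ and using $\sum_{k=1}^{m-1}(q_{k+1}-q_{k}) = q_{m}-q_{1}$, one gets
\begin{align*}
\sum_{\ell=1}^{m}(v_{\ell-1}-v_{\ell})\,q_{\ell} = (v_{0}-v_{m})\,q_{m} - \sum_{k=1}^{m-1}(v_{0}-v_{k})(q_{k+1}-q_{k}) = v_{0}\,q_{1} - v_{m}\,q_{m} + \sum_{k=1}^{m-1} v_{k}\,(q_{k+1}-q_{k}).
\end{align*}
Recalling $v_{m} = 1$, this produces the identity
\begin{align*}
\mathcal{H}_{\alpha}(x;\vec{t},\vec{u}) = (1-q_{m}) + v_{0}\,q_{1} + \sum_{k=1}^{m-1} v_{k}\,(q_{k+1}-q_{k}).
\end{align*}
On the right-hand side $1-q_{m} \geq 0$, each term $v_{k}(q_{k+1}-q_{k})$ is $\geq 0$, and $v_{0}q_{1} > 0$ since both factors are strictly positive; therefore $\mathcal{H}_{\alpha}(x;\vec{t},\vec{u}) > 0$, as claimed.

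The argument is elementary and I do not expect a genuine obstacle; the only points requiring care are the index bookkeeping in the summation by parts and the (standard) monotonicity and positivity of $\mathrm{Q}(\alpha,\cdot)$, which is the sole analytic ingredient. As a sanity check and source of intuition, the identity above exhibits $\mathcal{H}_{\alpha}$ as the moment generating function of a single nonnegative particle split among the annuli determined by $t_{1},\dots,t_{m}$, which makes positivity transparent; but the direct computation avoids having to set that picture up rigorously.
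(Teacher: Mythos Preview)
Your proof is correct. The summation by parts is carried out cleanly, the monotonicity of $\mathrm{Q}(\alpha,\cdot)$ gives $0<q_{1}\leq\dots\leq q_{m}\leq 1$, and the term $v_{0}q_{1}$ is indeed strictly positive, so the identity
\[
\mathcal{H}_{\alpha}(x;\vec{t},\vec{u})=(1-q_{m})+v_{0}\,q_{1}+\sum_{k=1}^{m-1}v_{k}(q_{k+1}-q_{k})
\]
yields the claim.

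The paper reaches the same decomposition by a slightly different path: it notes that $\partial_{u_{1}}\mathcal{H}_{\alpha}=e^{u_{1}+\dots+u_{m}}\mathrm{Q}(\alpha,t_{1}x)>0$, so it suffices to check the limit $u_{1}\to -\infty$, and then rearranges that limit into $(1-q_{m})+\sum_{k=1}^{m-1}v_{k}(q_{k+1}-q_{k})$. In your notation, letting $u_{1}\to -\infty$ kills exactly the $v_{0}q_{1}$ term, so the two decompositions coincide in that limit. Your Abel-summation route is a bit more direct, obtaining the full identity in one stroke rather than via monotonicity in $u_{1}$ plus a limit; the paper's argument, on the other hand, makes it transparent why the dependence on $u_{1}$ cannot spoil positivity. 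Either way the analytic content is identical (monotonicity and range of $\mathrm{Q}$), and both arguments are equally elementary.
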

\begin{proof}
Since
\begin{align*}
\partial_{u_{1}}\mathcal{H}_{\alpha}(x;\vec{t},\vec{u}) = e^{u_{1}+\ldots+u_{m}} \mathrm{Q}(\alpha,t_{1}x)>0,
\end{align*}
it only remains to verify that $\mathcal{H}_{\alpha}|_{u_{1}=-\infty}\geq 0$. Setting $u_{1}=-\infty$ in \eqref{def of H} and rearranging the terms, we find
\begin{align*}
\mathcal{H}_{\alpha}(x;\vec{t},\vec{u})|_{u_{1}=-\infty} = \big[ 1-\mathrm{Q}(\alpha,t_{m}x) \big] + \sum_{\ell=2}^{m} e^{u_{\ell}+\ldots+u_{m}}[\mathrm{Q}(\alpha,t_{\ell}x)-\mathrm{Q}(\alpha,t_{\ell-1}x)].
\end{align*}
Recall that $t_{1}>\ldots>t_{m} \geq 0$ and $x\in (0,1]$. Hence, since $\mathbb{R}\ni s \mapsto \mathrm{Q}(\alpha,s)$ decreases from $1$ to $0$, the $m$ terms in the above right-hand side are all $\geq 0$, which proves $\mathcal{H}_{\alpha}|_{u_{1}=-\infty}\geq 0$.
\end{proof}
\begin{theorem}\label{thm:main thm hard}
Let $m \in \mathbb{N}_{>0}$, $\alpha>0$ and $t_{1}>\dots>t_{m} \geq 0$ be fixed parameters. For $n \in \mathbb{N}_{>0}$, define
\begin{align}\label{rellhardedge}
& r_{\ell} = \Big( 1-\frac{t_{\ell}}{n} \Big)^{1/2}, \qquad \ell=1,\dots,m.
\end{align}
For any fixed $x_{1},\dots,x_{m} \in \mathbb{R}$, there exists $\delta > 0$ such that 
\begin{align}\label{asymp in main thm hard}
\mathbb{E}\bigg[ \prod_{j=1}^{m} e^{u_{j}\mathrm{N}(r_{j})} \bigg] = \exp \bigg( C_{1} n + C_{2} + \bigO(n^{-\frac{2\hat{\alpha}+\alpha}{2+\alpha}}) \bigg), \qquad \mbox{as } n \to + \infty
\end{align}
uniformly for $u_{1} \in \{z \in \mathbb{C}: |z-x_{1}|\leq \delta\},\dots,u_{m} \in \{z \in \mathbb{C}: |z-x_{m}|\leq \delta\}$, where $\hat{\alpha} := \min\{\alpha,1\}$ and
\begin{align}\nonumber
& C_{1} = \int_{0}^{1}\ln \mathcal{H}_{\alpha}(x;\vec{t},\vec{u}) \, dx, \\
& C_{2} = \int_{0}^{1} \mathcal{G}_{\alpha}(x;\vec{t},\vec{u}) \, dx + \frac{\ln \mathcal{H}_{\alpha}(1;\vec{t},\vec{u}) - \sum_{j=1}^{m}u_{j}}{2}.
\end{align}
In particular, since $\mathbb{E}\big[ \prod_{j=1}^{m} e^{u_{j}\mathrm{N}(r_{j})} \big]$ is analytic in $u_{1},\dots,u_{m} \in \mathbb{C}$ and is positive for $u_{1},\dots,u_{m} \in \mathbb{R}$, the asymptotic formula \eqref{asymp in main thm hard} combined with Cauchy's formula implies that
\begin{align}\label{der of main result hard}
\partial_{u_{1}}^{k_{1}}\dots \partial_{u_{m}}^{k_{m}} \bigg\{ \ln \mathbb{E}\bigg[ \prod_{j=1}^{m} e^{u_{j}\mathrm{N}(r_{j})} \bigg] - \Big( C_{1} n + C_{2} \Big) \bigg\} = \bigO(n^{-\frac{2\hat{\alpha}+\alpha}{2+\alpha}}), \qquad \mbox{as } n \to + \infty,
\end{align}
for any $k_{1},\dots,k_{m}\in \mathbb{N}$, and $u_{1},\dots,u_{m}\in \mathbb{R}$. 
\end{theorem}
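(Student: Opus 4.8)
The plan is to exploit the fact that the joint density \eqref{def of point process hard} is a $\beta=2$ ensemble with a \emph{rotation‑invariant} weight, so that the moment generating function has a product structure, and then to insert uniform asymptotics of the incomplete Beta function into that product.

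\textbf{Step 1: reduction to a product over $k=0,\dots,n-1$.} Since $(1-|z|^{2})^{\alpha-1}$ is radial, the monomials $z^{0},z^{1},z^{2},\dots$ are mutually orthogonal in $L^{2}(\D,(1-|z|^{2})^{\alpha-1}d^{2}z)$, and $\prod_{j=1}^{m}e^{u_{j}\mathrm{N}(r_{j})}=\prod_{j=1}^{n}g(z_{j})$ with $g(z)=\prod_{\ell:\,|z|<r_{\ell}}e^{u_{\ell}}$ radial and piecewise constant. Expanding the Vandermonde $\prod_{1\le j<k\le n}|z_{k}-z_{j}|^{2}=\big|\det(z_{j}^{k-1})_{j,k=1}^{n}\big|^{2}$ and integrating over the angular variables, numerator and denominator both factor over $k$, giving
\begin{align*}
\mathbb{E}\Big[\prod_{j=1}^{m}e^{u_{j}\mathrm{N}(r_{j})}\Big]=\prod_{k=0}^{n-1}\frac{\int_{0}^{1}g(\sqrt{s})\,s^{k}(1-s)^{\alpha-1}\,ds}{\int_{0}^{1}s^{k}(1-s)^{\alpha-1}\,ds}.
\end{align*}
A summation by parts in $\ell$ (using $r_{\ell}^{2}=1-t_{\ell}/n$) rewrites the $k$-th factor as
\begin{align*}
R_{k,n}:=1+\sum_{\ell=1}^{m}(e^{u_{\ell}}-1)\exp\Big[\sum_{j=\ell+1}^{m}u_{j}\Big]\,I_{1-t_{\ell}/n}(k+1,\alpha),
\end{align*}
where $I_{x}(a,b)$ is the regularized incomplete Beta function; thus $R_{k,n}$ is exactly $\mathcal{H}_{\alpha}$ with each $\mathrm{Q}(\alpha,t_{\ell}x)$ replaced by $I_{1-t_{\ell}/n}(k+1,\alpha)$. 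By Lemma \ref{Hjpositivelemma} and continuity in $\vec{u}$, $R_{k,n}$ stays in a fixed complex neighbourhood of $(0,+\infty)$ for $\vec{u}$ in the given polydisk, so $\ln R_{k,n}$ is well defined and $\ln\mathbb{E}[\cdots]=\sum_{k=0}^{n-1}\ln R_{k,n}$.

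\textbf{Step 2: uniform asymptotics of the incomplete Beta function.} Using $I_{1-t/n}(k+1,\alpha)=1-I_{t/n}(\alpha,k+1)$, substituting $s=\sigma/n$ in the integral representation, and expanding $(1-\sigma/n)^{k}=e^{-k\sigma/n}(1+\bigO(k\sigma^{2}/n^{2}))$ together with $\Gamma(k+1+\alpha)/\Gamma(k+1)$ by Stirling, one obtains, uniformly for $1\le k\le n$,
\begin{align*}
I_{1-t/n}(k+1,\alpha)=\mathrm{Q}\Big(\alpha,\tfrac{t(k+1)}{n}\Big)+\frac{1}{n}\cdot\frac{(\tfrac{k+1}{n})^{\alpha-1}}{\Gamma(\alpha)}\,t^{\alpha}e^{-t(k+1)/n}\,\frac{1-\alpha-t(k+1)/n}{2}+\mathrm{rem},
\end{align*}
with remainder $\bigO(n^{-2}(k/n)^{\alpha-2})$ once $k$ is bounded away from $1$; the expansion degenerates for $k$ of order $1$, where instead one uses that $I_{1-t/n}(\cdot)$ has an exact convergent expansion in powers of $t/n$ starting at $(t/n)^{\alpha}$. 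Dividing by $R_{k,n}$ and taking logarithms, the precise definition \eqref{def of G} of $\mathcal{G}_{\alpha}$ is exactly what makes the $1/n$ terms combine, so that $\ln R_{k,n}=\ln\mathcal{H}_{\alpha}(\tfrac{k+1}{n};\vec{t},\vec{u})+\tfrac1n\,\mathcal{G}_{\alpha}(\tfrac{k+1}{n};\vec{t},\vec{u})+\mathrm{rem}_{k,n}$.

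\textbf{Step 3: summation.} For $\alpha\ge1$ the functions $\ln\mathcal{H}_{\alpha}$ and $\mathcal{G}_{\alpha}$ are $C^{1}$ on $[0,1]$ (recall $\mathcal{H}_{\alpha}(0;\vec{t},\vec{u})=e^{\sum_{j}u_{j}}>0$), so Euler--Maclaurin gives $\sum_{k}\ln\mathcal{H}_{\alpha}(\tfrac{k+1}{n})=C_{1}n+\tfrac12(\ln\mathcal{H}_{\alpha}(1;\vec{t},\vec{u})-\sum_{j}u_{j})+\bigO(n^{-1})$ and $\tfrac1n\sum_{k}\mathcal{G}_{\alpha}(\tfrac{k+1}{n})=\int_{0}^{1}\mathcal{G}_{\alpha}+\bigO(n^{-1})$, while $\sum_{k}\mathrm{rem}_{k,n}=\bigO(n^{-1})$; this is \eqref{asymp in main thm hard} with $\hat{\alpha}=1$. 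For $\alpha<1$ the functions $\mathcal{G}_{\alpha}$ and $(\ln\mathcal{H}_{\alpha})'$ have an integrable singularity $\asymp x^{\alpha-1}$ at $x=0$, and the naive Riemann-sum / Euler--Maclaurin errors for the $x^{\alpha}$-type singular parts are of size $\asymp n^{-\alpha}$; these, however, \emph{cancel}, because the sum actually occurring in $\sum_{k}\ln R_{k,n}$ is $\sum_{k=0}^{n-1}\binom{k+\alpha}{k}=\binom{n+\alpha}{n-1}$, whose expansion proceeds in integer powers of $1/n$ (no $n^{-\alpha}$ term), whereas $\sum_{k}(k/n)^{\alpha}$ and $\tfrac1n\sum_{k}(k/n)^{\alpha-1}$ do carry such terms — the discrepancy being absorbed exactly by $C_{1},C_{2}$ and by the form of $\mathcal{G}_{\alpha}$. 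The remaining error is then controlled by splitting the sum at a cutoff $k\asymp n^{\alpha/(2+\alpha)}$: on $k\lesssim n^{\alpha/(2+\alpha)}$ one uses the exact small-$(t/n)$ expansion and crude bounds, on $k\gtrsim n^{\alpha/(2+\alpha)}$ the expansion of Step 2, whose $\bigO(n^{-2}(k/n)^{\alpha-2})$ remainder sums to $\bigO\big(n^{-\alpha}(n^{\alpha/(2+\alpha)})^{\alpha-1}\big)=\bigO(n^{-(2\alpha+\alpha)/(2+\alpha)})$; the two contributions balance, producing the error $\bigO(n^{-(2\hat{\alpha}+\alpha)/(2+\alpha)})$ in \eqref{asymp in main thm hard}.

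\textbf{Step 4: derivatives.} Since $\mathbb{E}[\prod_{j}e^{u_{j}\mathrm{N}(r_{j})}]$ is entire in $\vec{u}$ and positive for real $\vec{u}$, and \eqref{asymp in main thm hard} holds uniformly on the polydisk, Cauchy's integral formula on a slightly smaller polydisk bounds every mixed derivative of $\ln\mathbb{E}[\cdots]-(C_{1}n+C_{2})$ by the same power of $n$ (up to a constant depending on $k_{1},\dots,k_{m}$), which is \eqref{der of main result hard}.

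I expect the main obstacle to be Step 3 when $\alpha<1$: one must first recognize and prove the cancellation of the $n^{-\alpha}$-order terms — which is what forces the exact form of $\mathcal{G}_{\alpha}$ and of $C_{2}$, and which relies on the closed form for $\sum_{k}\binom{k+\alpha}{k}$ — and then push the multi-scale estimate with the cutoff $k\asymp n^{\alpha/(2+\alpha)}$ through with enough precision to reach the exponent $\tfrac{2\hat{\alpha}+\alpha}{2+\alpha}$.
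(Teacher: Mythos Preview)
Your proposal is essentially correct and follows the same strategy as the paper. Both arguments (i) factorise the moment generating function into a product over $j$ of normalized incomplete Beta values, (ii) split the $j$-sum at a cutoff $j\asymp n^{\alpha/(2+\alpha)}$ (the paper writes this as $n/M$ with $M\asymp n^{2/(2+\alpha)}$), (iii) use the exact finite expansion of $I_{1-t/n}(j,\alpha)$ in powers of $t/n$ for small $j$, (iv) use the Temme-type uniform asymptotics of $I_v(j,\alpha)$ in terms of $\mathrm{Q}(\alpha,\cdot)$ for large $j$, and (v) rely on the closed form $\sum_{j=0}^{K-1}\binom{j+\alpha}{j}=\binom{K+\alpha}{K-1}$ (the ``parallel summation formula'') to see that no stray $n^{-\alpha}$ term survives.

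Two small points where the paper is tighter. First, the paper invokes the splitting for \emph{all} $\alpha>0$, not only for $\alpha<1$: your Step~2 expansion is a large-$j$ expansion (Stirling/Temme) and does not hold uniformly down to bounded $j$, so even when $\alpha\ge 1$ some explicit handling of the lowest $j$'s is needed (you acknowledge this parenthetically but then drop it in Step~3). Relatedly, for $1<\alpha<2$ the second derivative of $\ln\mathcal{H}_\alpha$ blows up like $x^{\alpha-2}$ at $0$, so the ``direct Euler--Maclaurin gives $\bigO(n^{-1})$'' claim needs the same near-origin bookkeeping as in the $\alpha<1$ case; the paper avoids this by always working on $[M^{-1},1]$ and matching boundary terms with $S_0$. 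Second, rather than the heuristic ``the binomial sum has no $n^{-\alpha}$ term so the singular contributions cancel'', the paper makes the cancellation completely explicit: it expands both $S_0$ and $S_1$ to include the $M$-dependent subleading terms (namely $\pm\frac{n}{M}\ln\Omega$, $\pm\frac{\mathsf{T}_0}{\Omega\Gamma(\alpha+2)}\frac{n}{M^{1+\alpha}}$, $\pm\frac{(2-\alpha)\mathsf{T}_0}{2\Omega\Gamma(\alpha+1)}M^{-\alpha}$) and checks that they drop out when summed. Your sketch has all the right pieces; turning it into a proof is exactly this matching, which is where the choice of cutoff $n^{\alpha/(2+\alpha)}$ is forced.
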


Let $(\mathbb{N}^{m})_{>0} := \{\vec{j}=(j_{1},\dots,j_{m}) \in \mathbb{N}^{m}: j_{1}+\dots+j_{m}\geq 1\}$. For $\vec{j} \in (\mathbb{N}^{m})_{>0}$, the joint cumulant $\kappa_{\vec{j}}=\kappa_{\vec{j}}(r_{1},\dots,r_{m};n,\alpha)$ of $\mathrm{N}(r_{1}), \dots, \mathrm{N}(r_{m})$  is defined by
\begin{align}\label{joint cumulant}
\kappa_{\vec{j}}=\kappa_{j_{1},\dots,j_{m}}:=\partial_{\vec{u}}^{\vec{j}} \ln \mathbb{E}[e^{u_{1}\mathrm{N}(r_{1})+\dots + u_{m}\mathrm{N}(r_{m})}] \Big|_{\vec{u}=\vec{0}},
\end{align}
where $\partial_{\vec{u}}^{\vec{j}}:=\partial_{u_{1}}^{j_{1}}\dots \partial_{u_{m}}^{j_{m}}$ and $\vec{0}:=(0,\ldots,0)$. For instance, we have
\begin{align*}
\mathbb{E}[\mathrm{N}(r)] = \kappa_{1}(r), \qquad \mbox{Var}[\mathrm{N}(r)] = \kappa_{2}(r) =\kappa_{(1,1)}(r,r), \qquad \mbox{Cov}[\mathrm{N}(r_{1}),\mathrm{N}(r_{2})] = \kappa_{(1,1)}(r_{1},r_{2}).
\end{align*}
\begin{corollary}\label{coro:correlation hard}
Let $m \in \mathbb{N}_{>0}$, $\vec{j} \in (\mathbb{N}^{m})_{>0}$, $\alpha > 0$,  and $t_{1}>\dots>t_{m} \geq 0$ be fixed. For $n \in \mathbb{N}_{>0}$, define $\{r_\ell\}_{\ell =1}^m$ by \eqref{rellhardedge}.

(a) The joint cumulant $\kappa_{\vec{j}}$ satisfies
\begin{align}\label{asymp cumulant hard edge}
\kappa_{\vec{j}} = \partial_{\vec{u}}^{\vec{j}}C_{1}\big|_{\vec{u}=\vec{0}} \; n + \partial_{\vec{u}}^{\vec{j}}C_{2}\big|_{\vec{u}=\vec{0}} + \bigO\big(n^{-\frac{2\hat{\alpha}+\alpha}{2+\alpha}}\big), \qquad  n \to +\infty,
\end{align}
where $C_{1},C_{2}$ are as in Theorem \ref{thm:main thm hard} and $\hat{\alpha} := \min\{\alpha,1\}$. In particular, for any $1 \leq \ell < k \leq m$,
\begin{align*}
& \mathbb{E}[\mathrm{N}(r_{\ell})] = b_1(t_\ell) n + c_1(t_\ell) + \bigO\big(n^{-\frac{2\hat{\alpha}+\alpha}{2+\alpha}}\big),
 	\\
& \mathrm{Var}[\mathrm{N}(r_{\ell})] = b_{(1,1)}(t_{\ell},t_{\ell})n + c_{(1,1)}(t_{\ell},t_{\ell}) + \bigO\big(n^{-\frac{2\hat{\alpha}+\alpha}{2+\alpha}}\big),
	 \\
& \mathrm{Cov}(\mathrm{N}(r_{\ell}),\mathrm{N}(r_{k})) = b_{(1,1)}(t_{\ell},t_{k})n + c_{(1,1)}(t_{\ell},t_{k}) + \bigO\big(n^{-\frac{2\hat{\alpha}+\alpha}{2+\alpha}}\big) \nonumber
\end{align*}
as $n \to + \infty$, where
\begin{align}
 b_1(t_\ell) =&\; \int_{0}^{1} \mathrm{Q}(\alpha,t_{\ell}x)dx = \mathrm{Q}(\alpha,t_{\ell}) + \alpha \frac{1-\mathrm{Q}(\alpha+1,t_{\ell})}{t_{\ell}}, \label{def of b1} \\ 
c_1(t_\ell) = & \frac{t_{\ell}^{\alpha}}{\Gamma(\alpha)} \int_{0}^{1} x^{\alpha-1}e^{-t_{\ell}x}\frac{1-\alpha-t_{\ell}x}{2}dx + \frac{\mathrm{Q}(\alpha,t_{\ell})-1}{2}, \nonumber
\end{align}
and, for $\ell \leq k$,
\begin{align}\label{def of b11 hard edge}
b_{(1,1)}(t_{\ell},t_{k}) = &\; \int_{0}^{1} \mathrm{Q}(\alpha,t_{\ell}x)\big(1-\mathrm{Q}(\alpha,t_{k}x)\big)dx,
	 \\ \nonumber
c_{(1,1)}(t_{\ell},t_{k}) = &\; \int_{0}^{1}  \bigg\{ (\alpha-1+t_{k}x)t_{k}^{\alpha}e^{-t_{k}x}\mathrm{Q}(\alpha,t_{\ell}x) - (\alpha-1+t_{\ell}x)t_{\ell}^{\alpha}e^{-t_{\ell}x} \big(1-\mathrm{Q}(\alpha,t_{k}x)\big) \bigg\}\frac{x^{\alpha-1}dx}{2\, \Gamma(\alpha)} \\
& + \frac{1}{2}\mathrm{Q}(\alpha,t_{\ell})\big(1-\mathrm{Q}(\alpha,t_{k})\big).
\end{align}

(b) Assume furthermore that $t_{m}>0$. As $n \to + \infty$, the random variable $(\mathcal{N}_{1},\dots,\mathcal{N}_{m})$, where
\begin{align}
& \mathcal{N}_{\ell} := \frac{\mathrm{N}(r_{\ell})-b_1(t_\ell) n}{\sqrt{b_{(1,1)}(t_\ell,t_\ell) n}}, \qquad \ell=1,\dots,m,
\label{Nj hard edge}
\end{align}
convergences in distribution to a multivariate normal random variable of mean $(0,\dots,0)$ whose covariance matrix $\Sigma$ is given by
\begin{align*}
\Sigma_{\ell,k} =  \Sigma_{k, \ell} = \frac{b_{(1,1)}(t_{\ell},t_{k})}{\sqrt{b_{(1,1)}(t_{\ell},t_{\ell})b_{(1,1)}(t_{k},t_{k})}}, \qquad 1 \leq \ell \leq k \leq m.
\end{align*}
\end{corollary}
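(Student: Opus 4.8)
The plan is to derive part~(a) by differentiating the conclusion of Theorem~\ref{thm:main thm hard}, and part~(b) by feeding a shrinking choice of $\vec u$ into the same asymptotic formula. For~(a), formula \eqref{der of main result hard} evaluated at $\vec u=\vec 0$ immediately gives \eqref{asymp cumulant hard edge} with coefficients $\partial_{\vec u}^{\vec j}C_1|_{\vec u=\vec 0}$ and $\partial_{\vec u}^{\vec j}C_2|_{\vec u=\vec 0}$ and with the stated error term inherited verbatim, so the only remaining task is to compute these two derivatives for $\vec j=\vec e_\ell$ and $\vec j=\vec e_\ell+\vec e_k$. Here I would first record that $\mathcal H_\alpha(x;\vec t,\vec 0)=1$ and $\mathcal G_\alpha(x;\vec t,\vec 0)=0$, and that — by exactly the combinatorial bookkeeping already used in the proof of Lemma~\ref{Hjpositivelemma} — applying $\partial_{u_\ell}$ (or $\partial_{u_\ell}\partial_{u_k}$ with $\ell\le k$) to the sum in \eqref{def of H} and then setting $\vec u=\vec 0$ leaves only the $\ell$-th summand, contributing $\mathrm{Q}(\alpha,t_\ell x)$, while every other way of distributing the derivatives among the summands vanishes at $\vec u=\vec 0$; likewise for the sum in \eqref{def of G}, where the surviving contribution is $\frac{x^{\alpha-1}}{\Gamma(\alpha)}t_\ell^\alpha e^{-t_\ell x}\frac{1-\alpha-t_\ell x}{2}$. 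Consequently $\partial_{u_\ell}\ln\mathcal H_\alpha|_{\vec 0}=\mathrm{Q}(\alpha,t_\ell x)$ and, using $\partial_{u_\ell}\partial_{u_k}\ln\mathcal H_\alpha=\frac{\partial_{u_\ell}\partial_{u_k}\mathcal H_\alpha}{\mathcal H_\alpha}-\frac{(\partial_{u_\ell}\mathcal H_\alpha)(\partial_{u_k}\mathcal H_\alpha)}{\mathcal H_\alpha^2}$, one gets $\partial_{u_\ell}\partial_{u_k}\ln\mathcal H_\alpha|_{\vec 0}=\mathrm{Q}(\alpha,t_\ell x)\bigl(1-\mathrm{Q}(\alpha,t_k x)\bigr)$ for $\ell\le k$; integrating over $x\in(0,1)$ yields $b_1(t_\ell)$ and $b_{(1,1)}(t_\ell,t_k)$. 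The same scheme applied to $C_2$, using that $\mathcal G_\alpha$ vanishes at $\vec u=\vec 0$ (so the $1/\mathcal H_\alpha$ prefactor enters only through the cross term of the second derivative) and adding the elementary contribution of $\tfrac12\bigl(\ln\mathcal H_\alpha(1;\vec t,\vec u)-\sum_j u_j\bigr)$, produces $c_1(t_\ell)$ and $c_{(1,1)}(t_\ell,t_k)$. Finally, the closed form $b_1(t_\ell)=\mathrm{Q}(\alpha,t_\ell)+\alpha\frac{1-\mathrm{Q}(\alpha+1,t_\ell)}{t_\ell}$ follows by integrating $\int_0^1\mathrm{Q}(\alpha,t_\ell x)\,dx$ by parts, using $\frac{d}{dx}\mathrm{Q}(\alpha,t_\ell x)=-t_\ell\frac{(t_\ell x)^{\alpha-1}e^{-t_\ell x}}{\Gamma(\alpha)}$ and the identity $\Gamma(\alpha+1)-\Gamma(\alpha+1,t_\ell)=\alpha\Gamma(\alpha)\bigl(1-\mathrm{Q}(\alpha+1,t_\ell)\bigr)$.

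For (b), the hypothesis $t_m>0$ guarantees $b_{(1,1)}(t_\ell,t_\ell)=\int_0^1\mathrm{Q}(\alpha,t_\ell x)\bigl(1-\mathrm{Q}(\alpha,t_\ell x)\bigr)dx>0$ for every $\ell$ (since $0<\mathrm{Q}(\alpha,s)<1$ for $0<s<\infty$), so the normalization \eqref{Nj hard edge} is well-defined. Fix $\vec v\in\R^m$ and set $u_\ell=u_\ell(n):=v_\ell/\sqrt{b_{(1,1)}(t_\ell,t_\ell)n}$; for $n$ large this lies in the polydisk of Theorem~\ref{thm:main thm hard}, and
\begin{align*}
\mathbb{E}\bigl[e^{\sum_\ell v_\ell\mathcal{N}_\ell}\bigr]
&=\exp\!\Big({-\sqrt{n}}\sum_\ell\tfrac{b_1(t_\ell)v_\ell}{\sqrt{b_{(1,1)}(t_\ell,t_\ell)}}\Big)\,
\mathbb{E}\Big[\prod_{j=1}^m e^{u_j(n)\mathrm N(r_j)}\Big] \\
&=\exp\!\Big({-\sqrt{n}}\sum_\ell\tfrac{b_1(t_\ell)v_\ell}{\sqrt{b_{(1,1)}(t_\ell,t_\ell)}}+C_1 n+C_2+o(1)\Big).
\end{align*}
Since $C_1$ and $C_2$ are analytic in $\vec u$ near $\vec 0$ with $C_1|_{\vec 0}=C_2|_{\vec 0}=0$ and with the gradient and Hessian of $C_1$ at $\vec 0$ given by the $b_1$'s and $b_{(1,1)}$'s from part~(a), a second-order Taylor expansion gives $C_1 n=\sqrt{n}\sum_\ell\frac{b_1(t_\ell)v_\ell}{\sqrt{b_{(1,1)}(t_\ell,t_\ell)}}+\frac12\sum_{\ell,k}\Sigma_{\ell,k}v_\ell v_k+\bigO(n^{-1/2})$ and $C_2=\bigO(n^{-1/2})$. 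The two $\sqrt{n}$-terms cancel, so $\mathbb{E}[e^{\sum_\ell v_\ell\mathcal{N}_\ell}]\to\exp(\tfrac12\vec{v}^{T}\Sigma\vec{v})$, the moment generating function of $\mathcal{N}(\vec 0,\Sigma)$; as this holds for all $\vec v$ in a neighborhood of the origin, a standard continuity theorem for moment generating functions yields convergence in distribution. (Equivalently, one can argue by the method of cumulants: part~(a) shows that all joint cumulants of $(\mathcal{N}_1,\dots,\mathcal{N}_m)$ of order $\ge 3$ are $\bigO(n^{1-|\vec j|/2})\to 0$ and those of order $\le 2$ converge to the cumulants of $\mathcal{N}(\vec 0,\Sigma)$, which is moment-determinate.)

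Both parts are essentially mechanical once Theorem~\ref{thm:main thm hard} is in hand; the points requiring some care are the combinatorial bookkeeping of which second $u$-derivatives of $\mathcal H_\alpha$ and $\mathcal G_\alpha$ survive at $\vec u=\vec 0$ (already encountered in Lemma~\ref{Hjpositivelemma}) and the exact cancellation of the divergent $\sqrt{n}$ contribution in part~(b), which pins down the centering $b_1(t_\ell)n$ and hinges on $\partial_{u_\ell}C_1|_{\vec 0}$ being precisely $b_1(t_\ell)$.
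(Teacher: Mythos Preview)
Your proposal is correct and follows the same overall strategy as the paper: part~(a) comes from \eqref{der of main result hard} at $\vec u=\vec 0$, and part~(b) from inserting a shrinking $\vec u$ into the main asymptotic and Taylor-expanding $C_1,C_2$. Two minor execution differences are worth noting. First, for the closed form of $b_1(t_\ell)$ the paper exchanges the order of integration (Fubini) rather than integrating by parts; both computations are equally short. Second, for~(b) the paper uses the \emph{characteristic} function, taking $u_\ell=\tfrac{iv_\ell}{\sqrt{b_{(1,1)}(t_\ell,t_\ell)n}}$ and concluding via L\'evy's continuity theorem, whereas you use real $u_\ell$ and convergence of moment generating functions; since Theorem~\ref{thm:main thm hard} is stated uniformly on complex polydiscs, both choices are legitimate, though the characteristic-function route is the more standard way to package the argument. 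Your additional detail on which $u$-derivatives of $\mathcal H_\alpha$ and $\mathcal G_\alpha$ survive at $\vec u=\vec 0$ is correct and fills in what the paper leaves implicit.
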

\begin{remark}\label{remark:N=n with prob 1}
If $t_{m}=0$, then $b_{1}(t_{m})=n$ and $c_{1}(t_{m})=b_{(1,1)}(t_{m},t_{m})=c_{(1,1)}(t_{m},t_{m})=0$, which is consistent with the fact that $\mathrm{N}(r_{m})=n$ with probability $1$. This is the reason why we required $t_{m}>0$ in Corollary \ref{coro:correlation hard} (b).
\end{remark}
\begin{figure}[h]
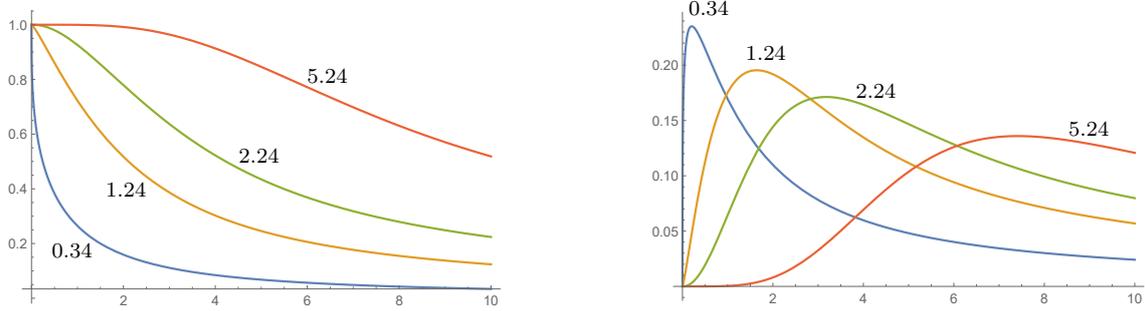

\begin{center}
\begin{tikzpicture}[master]
\node at (0,0) {\includegraphics[width=6.5cm]{mean_b1}};
\node at (-2.41,-1.21) {\footnotesize $0.34$};
\node at (-1.7,-0.41) {\footnotesize $1.24$};
\node at (0.05,0.05) {\footnotesize $2.24$};
\node at (0.95,1.1) {\footnotesize $5.24$};
\end{tikzpicture} \hspace{1.5cm}
\begin{tikzpicture}[slave]
\node at (0,0) {\includegraphics[width=6.5cm]{var_b11}};
\node at (-2.5,2) {\footnotesize $0.34$};
\node at (-1.75,1.4) {\footnotesize $1.24$};
\node at (-0.3,0.9) {\footnotesize $2.24$};
\node at (2.5,0.4) {\footnotesize $5.24$};
\end{tikzpicture}
\end{center}
\vspace{-.5cm}
\caption{\label{fig: b11} The coefficients $t \mapsto b_{1}(t)$ (left) and $t \mapsto b_{(1,1)}(t,t)$ (right) for the indicated values of $\alpha$. }
\end{figure}
\begin{proof}[Proof of Corollary \ref{coro:correlation hard}]
Assertion (a) (except for the second equality in \eqref{def of b1}) is a direct consequence of \eqref{der of main result hard} and \eqref{joint cumulant}. The second equality in \eqref{def of b1} is obtained using \eqref{def of Q}, Fubini's theorem, and $\Gamma(\alpha+1)=\alpha \, \Gamma(\alpha)$:
\begin{align*}
& \int_{0}^{1} \mathrm{Q}(\alpha,t_{\ell}x)dx = \int_{0}^{1}dx \int_{t_{\ell}x}^{+\infty} dy \frac{e^{-y}y^{\alpha-1}}{\Gamma(\alpha)} = \bigg( \int_{0}^{t_{\ell}}dy \int_{0}^{\frac{y}{t_{\ell}}} dx + \int_{t_{\ell}}^{+\infty} dy \int_{0}^{1} dx \bigg)\frac{e^{-y}y^{\alpha-1}}{\Gamma(\alpha)} \\
& = \frac{\alpha}{t_{\ell}} \int_{0}^{t_{\ell}} \frac{e^{-y}y^{\alpha}}{\Gamma(\alpha+1)}dy + \int_{t_{\ell}}^{+\infty}\frac{e^{-y}y^{\alpha-1}}{\Gamma(\alpha)}dy = \alpha \frac{1-\mathrm{Q}(\alpha+1,t_{\ell})}{t_{\ell}} + \mathrm{Q}(\alpha,t_{\ell}).
\end{align*}
We now turn to the proof of (b). Using \eqref{asymp in main thm hard} with $u_\ell = \frac{i v_\ell }{\sqrt{b_{(1,1)}(t_\ell,t_\ell) n}}$ and $v_\ell \in \mathbb{R}$ fixed, we get
\begin{align*}
\mathbb{E}[e^{i \sum_{\ell = 1}^m v_\ell \mathcal{N}_\ell}]
& = \mathbb{E}[e^{\sum_{\ell = 1}^m u_\ell \mathrm{N}(r_{\ell})}]
e^{- \sum_{\ell = 1}^m u_\ell b_1(t_\ell) n}
	\\
& = e^{C_{1}(\vec{u}) n + C_{2}(\vec{u}) + \bigO(n^{-\frac{2\hat{\alpha}+\alpha}{2+\alpha}})}
e^{- \sum_{\ell = 1}^m u_\ell \partial_{u_\ell} C_1|_{\vec{u}=\vec{0}} n }
\end{align*}
as $n \to +\infty$, where the dependence of $C_{1}$ and $C_{2}$ in $\vec{u}$ has been made explicit. Since $C_j|_{\vec{u}=\vec{0}} = 0$ for $j = 1,2$ and $u_\ell = \bigO(n^{-1/2})$, we thus have
\begin{align*}
\mathbb{E}[&e^{i \sum_{\ell = 1}^m v_\ell \mathcal{N}_\ell}]
 =  e^{\frac{1}{2}\sum_{\ell,k= 1}^m u_\ell u_k \partial_{u_\ell}\partial_{u_k} C_1|_{\vec{u}=\vec{0}} n
+ \bigO(|\vec{u}|^3 n + |\vec{u}| + n^{-\frac{2\hat{\alpha}+\alpha}{2+\alpha}})}
	\\
& =  e^{\frac{1}{2}\sum_{\ell,k = 1}^m \frac{iv_\ell}{\sqrt{b_{(1,1)}(t_\ell,t_\ell)}} \frac{iv_k}{\sqrt{b_{(1,1)}(t_k,t_k)}} b_{(1,1)}(t_{\min(\ell,k)}, t_{\max(\ell,k)}) + o( 1)}
\to  e^{-\frac{1}{2}\sum_{\ell, k=1}^m v_\ell \Sigma_{\ell,k} v_k}
\end{align*}
as $n \to +\infty$. In other words, $\mathbb{E}[e^{i \sum_{\ell = 1}^m v_\ell \mathcal{N}_\ell}]$ converges pointwise to $e^{-\frac{1}{2}\sum_{\ell, k=1}^m v_\ell \Sigma_{\ell,k} v_k}$ as $n\to + \infty$, which implies Assertion (b) by L\'evy's continuity theorem.
\end{proof}

\textbf{Comparison with other works on counting statistics.} There has been a lot of interest recently on counting statistics of two-dimensional point processes. We will not attempt to survey this literature here, but refer the interested reader to \cite{BF2022} and the introduction of \cite{ABE}. Our main goal in this subsection is to compare Theorem \ref{thm:main thm hard} with the works \cite{ChLe2022, ACCL2022}. 

In \cite{ChLe2022}, the following Mittag-Leffler ensemble is considered:
\begin{align}\label{def of ML}
\frac{1}{n!\hat{Z}_{n}} \prod_{1 \leq j < k \leq n} |z_{k} -z_{j}|^{2} \prod_{j=1}^{n}|z_{j}|^{2a}e^{-n |z_{j}|^{2b}}d^{2}z_{j}, \qquad z_{1},\ldots,z_{n} \in \mathbb{C},
\end{align}
where $b>0$ and $a>-1$ are parameters of the model. The associated equilibrium measure $\mu^{\mathrm{ML}}$ is supported on the disk $\{|z| \leq b^{-\frac{1}{2b}}\}$ and given by $\mu^{\mathrm{ML}}(d^{2}z) = \frac{b^{2}}{\pi}|z|^{2b-2}d^{2}z$. Note that $\mu^{\mathrm{ML}}$ is absolutely continuous with respect to the Lebesgue measure $d^{2}z$ (this contrasts with the equilibrium measure $\mu$ of \eqref{def of point process hard}, which is purely singular). Let $m \in \mathbb{N}_{>0}$, $r \in (0,b^{-\frac{1}{2b}})$, $\mathfrak{s}_{1},\ldots,\mathfrak{s}_{m} \in \mathbb{R}$, $a > -1$ and $b>0$ be fixed parameters such that $\mathfrak{s}_{1}<\ldots<\mathfrak{s}_{m}$. The main result of \cite{ChLe2022} is the large $n$ asymptotics of the $m$-point moment generating function of the disk counting statistics of \eqref{def of ML} when the radii are merging either in the bulk, i.e. $r_{\ell} = r \big( 1+\frac{\sqrt{2}\, \mathfrak{s}_{\ell}}{r^{b}\sqrt{n}} \big)^{\frac{1}{2b}}$ for all $\ell\in\{1,\ldots,m\}$, or at the soft edge, i.e. $r_{\ell} = b^{-\frac{1}{2b}} \big( 1+\sqrt{2b}\frac{\mathfrak{s}_{\ell}}{\sqrt{n}} \big)^{\frac{1}{2b}}$ for all $\ell \in\{1,\ldots,m\}$. In both cases, it is shown in \cite{ChLe2022} that
\begin{align*}
\mathbb{E}^{\mathrm{ML}}\bigg[ \prod_{j=1}^{m} e^{u_{j}\mathrm{N}(r_{j})} \bigg] = \exp \bigg( D_{1} n + D_{2} \sqrt{n} + D_{3} +  \frac{D_{4}}{\sqrt{n}} + \bigO\bigg(\frac{(\ln n)^{2}}{n}\bigg)\bigg), \qquad \mbox{as } n \to + \infty,
\end{align*}
and the constants $D_{1},\ldots,D_{4}$ are determined explicitly. The constant $D_{1}$ is particularly simple; for example, in the bulk regime it is given by $D_{1} = \int_{|z|\leq r} \mu^{\mathrm{ML}}(d^{2}z) \sum_{j=1}^{m}u_{j} = b r^{2b} \sum_{j=1}^{m}u_{j}$. The constant $D_{2}$ is more complicated and given by
\begin{align*}
D_{2} = \begin{cases}
\ds \sqrt{2}\, b r^{b} \int_{-\infty}^{+\infty} \bigg( \ln \mathcal{H}^{\mathrm{ML}}(x; \vec{\mathfrak{s}},\vec{u}) - \chi_{(-\infty,0)}(x) \sum_{j=1}^{m}u_{j}\bigg) dx, & \mbox{for the bulk regime}, \\
\ds \sqrt{2b}\int_{-\infty}^{0} \bigg( \ln \mathcal{H}^{\mathrm{ML}}(x; \vec{\mathfrak{s}},\vec{u}) - \sum_{j=1}^{m}u_{j}\bigg) dx, & \mbox{for the soft edge regime},
\end{cases}
\end{align*}
where $\chi_{(-\infty,0)}(x)=1$ if $x<0$ and $\chi_{(-\infty,0)}(x)=0$ otherwise, $\vec{\mathfrak{s}} := (\mathfrak{s}_{1},\ldots,\mathfrak{s}_{m})$, and
\begin{align*}
\mathcal{H}^{\mathrm{ML}}(x; \vec{\mathfrak{s}},\vec{u}):= 1 + \sum_{\ell=1}^{m} (e^{u_{\ell}}-1)\exp\bigg[ \sum_{j=\ell+1}^{m}u_{j} \bigg] \frac{\mathrm{erfc}(x-\mathfrak{s}_{\ell})}{2}.
\end{align*}
We find it curious that the above function has the same structure as the function $\mathcal{H}_{\alpha}$ in \eqref{def of H}; namely, there are both of the form
\begin{align}\label{special form}
1 + \sum_{\ell=1}^{m} (e^{u_{\ell}}-1)\exp \bigg[ \sum_{j=\ell+1}^{m}u_{j} \bigg] X_{\ell}(x)
\end{align}
where $X_{\ell}(x):= \mathrm{Q}(\alpha,t_{\ell}x)$ in the present paper and $X_{\ell}(x):= \mathrm{erfc}(x-\mathfrak{s}_{\ell})/2$ in \cite{ChLe2022}. Note also that 
\begin{itemize}
\item $\mathcal{H}_{\alpha}$ already appears in the leading constant $C_{1}$, while $\mathcal{H}^{\mathrm{ML}}$ appears in $D_{2}$, 
\item $t_{1},\ldots,t_{m}$ are dilation parameters of $X_{\ell}$, in the sense that they appear in the multiplicative form ``$t_{\ell}x$" in $\mathrm{Q}(\alpha,t_{\ell}x)$, while $\mathfrak{s}_{1},\ldots,\mathfrak{s}_{m}$ are translation parameters of $X_{\ell}$, in the sense that they appear in the additive form ``$x-\mathfrak{s}_{\ell}$" in $\mathrm{erfc}(x-\mathfrak{s}_{\ell})/2$.
\end{itemize}
Let $0< \rho < b^{-\frac{1}{2b}}$ be fixed. The following point process was considered in \cite{ACCL2022}:
\begin{align}\label{ML hard}
\frac{1}{n!\tilde{Z}_{n}} \prod_{1 \leq j < k \leq n} |z_{k} -z_{j}|^{2} \prod_{j=1}^{n}|z_{j}|^{2\alpha}e^{-n |z_{j}|^{2b}}d^{2}z_{j}, \qquad |z_{j}|\leq \rho.
\end{align}
The only (but important) difference between the point processes \eqref{def of ML} and \eqref{ML hard} is that in \eqref{ML hard} the points are constrained to lie in the disk $\{|z| \leq \rho\}$. Because $\rho<b^{-\frac{1}{2b}}$, the circle $\{|z| = \rho\}$ is a hard wall of \eqref{ML hard} and it is shown in \cite{ACCL2022} that the associated equilibrium measure $\mu_{h}^{\mathrm{ML}}$ is given by
\begin{align}\label{eq mes}
& \mu_{h}^{\mathrm{ML}}(d^{2}z) = \mu_{\mathrm{reg}}^{\mathrm{ML}}(d^{2}z) + \mu_{\mathrm{sing}}^{\mathrm{ML}}(d^{2}z),\nonumber
\\ & \mu_{\mathrm{reg}}^{\mathrm{ML}}(d^{2}z) := 2b^{2}r^{2b-1}dr\frac{d\theta}{2\pi}, \quad \mu_{\mathrm{sing}}^{\mathrm{ML}}(d^{2}z) := c_{\rho} \delta_{\rho}(r) dr \frac{d\theta}{2\pi},
\end{align}
where $z=re^{i\theta}$, $r>0$, $\theta \in (-\pi,\pi]$ and $c_{\rho} := \int_{|z| > \rho}\mu^{\mathrm{ML}}(d^{2}z) = \int_{\rho}^{b^{-\frac{1}{2b}}} 2b^{2}r^{2b-1}dr = 1-b\rho^{2b}$. For the hard edge regime $r_{\ell} = \rho \big( 1-\frac{t_{\ell}}{n} \big)^{\frac{1}{2b}}$ with $t_{1}>\dots>t_{m}\geq 0$, it is proved in \cite{ACCL2022} that 
\begin{align}\label{hard edge asymptotic ML}
\mathbb{E}_{h}^{\mathrm{ML}}\bigg[ \prod_{j=1}^{m} e^{u_{j}\mathrm{N}(r_{j})} \bigg] = \exp \bigg(E_{1}n + E_{2}\ln n + E_{3} + \frac{E_{4}}{\sqrt{n}} + \bigO(n^{-\frac{3}{5}})\bigg), \qquad \mbox{as } n \to + \infty,
\end{align}
where $E_{1} = \int_{|z|\leq \rho} \mu_{\mathrm{reg}}^{\mathrm{ML}}(d^{2}z) \times \sum_{j=1}^{m}u_{j} + \int_{b\rho^{2b}}^{1} \ln \mathcal{H}_{h}^{\mathrm{ML}} (x;\vec{t},\vec{u})dx$ 
with
\begin{align*}
\mathcal{H}_{h}^{\mathrm{ML}} (x;\vec{t},\vec{u}) = 1 + \sum_{\ell=1}^{m} (e^{u_{\ell}}-1)\exp \bigg[ \sum_{j=\ell+1}^{m}u_{j} \bigg] e^{-\frac{t_{\ell}}{b}(x-b\rho^{2b})}.
\end{align*}
This function is also in the form \eqref{special form}, with $X_{\ell}(x) = e^{-\frac{t_{\ell}}{b}(x-b\rho^{2b})} = \mathrm{Q}(1,\frac{t_{\ell}}{b}(x-b\rho^{2b}))$. It is also interesting to note the presence of the term $E_{2}\ln n$ in \eqref{hard edge asymptotic ML}, while in \eqref{asymp in main thm hard} there is no term proportional to $\ln n$. We believe the reason for this is that $\mu_{h}^{\mathrm{ML}}$ has a non-trivial component $\mu_{\mathrm{reg}}^{\mathrm{ML}}$ which is absolutely continuous with respect to $d^{2}z$, while the equilibrium measure $\mu$ of \eqref{def of point process hard} is purely singular. This belief is supported by the following fact: when $\rho \to 0$, the measure $\mu_{h}^{\mathrm{ML}}$ becomes purely singular (because $c_{\rho}\to 1$), and $E_{2}\to 0$ (as can be easily checked from \cite[Theorem 1.3]{ACCL2022}).

\medskip A transition regime between the hard edge and the bulk was also considered in \cite{ACCL2022}. This regime is called ``the semi-hard edge regime" and corresponds to the case when the radii are at a distance of order $1/\sqrt{n}$ from the hard edge. More precisely, for $r_{\ell} = \rho \big( 1+\frac{\sqrt{2}\, \mathfrak{s}_{\ell}}{\rho^{b}\sqrt{n}} \big)^{\frac{1}{2b}}$ with $\mathfrak{s}_{1}<\dots<\mathfrak{s}_{m}<0$, we have
\begin{align*}
\mathbb{E}_{h}^{\mathrm{ML}}\bigg[ \prod_{j=1}^{m} e^{u_{j}\mathrm{N}(r_{j})} \bigg] = \exp \bigg(F_{1}n + F_{2}\sqrt{n} + F_{3} + \frac{F_{4}}{\sqrt{n}} + \bigO\bigg(\frac{(\ln n)^{4}}{n}\bigg)\bigg),
\end{align*}
where $F_{1} = \int_{|z|\leq \rho} \mu_{\mathrm{reg}}^{\mathrm{ML}}(d^{2}z) \sum_{j=1}^{m}u_{j} = b \rho^{2b} \sum_{j=1}^{m}u_{j}$ and
\begin{align*}
F_{2} = \sqrt{2} \, b \rho^{b} \int_{-\infty}^{+\infty} \Big( \ln \mathcal{H}^{\mathrm{ML}}_{\mathrm{sh}}(x; \vec{\mathfrak{s}},\vec{u})-\chi_{(-\infty,0)}(x) \sum_{j=1}^{m}u_{j} \Big)dx,
\end{align*}
with
\begin{align*}
\mathcal{H}^{\mathrm{ML}}_{\mathrm{sh}}(x; \vec{\mathfrak{s}},\vec{u}) = 1+\sum_{\ell=1}^{m} (e^{u_{\ell}}-1)\exp \bigg[ \sum_{j=\ell+1}^{m}u_{j} \bigg] \frac{\mathrm{erfc}(x-\mathfrak{s}_{\ell})}{\mathrm{erfc}(x)}.
\end{align*}
The function $\mathcal{H}^{\mathrm{ML}}_{\mathrm{sh}}$ is also in the form \eqref{special form}, with $X_{\ell}(x) = \frac{\mathrm{erfc}(x-\mathfrak{s}_{\ell})}{\mathrm{erfc}(x)}$. The above discussion is summarized in Figure \ref{fig:summary}. 
\begin{figure}[h!]
\begin{center}
\begin{tikzpicture}
\node at (0,0) {$
\begin{array}{|c|c|c|c|c|}\hline 
\mbox{Point process} & \mbox{Regime} & \ds \mathbb{E}\bigg[ \prod_{j=1}^{m} e^{u_{j}\mathrm{N}(r_{j})} \bigg] \quad \mbox{as } n \to +\infty & X_{\ell}(x) & \mbox{Ref}  \\ \hline 
\eqref{def of point process hard} & \mbox{Hard edge} &  \exp \big( C_{1} n + C_{2} + \bigO(n^{-\frac{2\hat{\alpha}+\alpha}{2+\alpha}}) \big)  & \mathrm{Q}(\alpha,t_{\ell}x) & \mbox{Theorem } \ref{thm:main thm hard} \rule{0pt}{0.45cm} \\[0.1cm] \hline 
\eqref{def of ML} & \begin{subarray}{l} \mbox{Bulk \& } \\[0.1cm] \mbox{Soft edge} \end{subarray}  &  \exp \big( D_{1} n + D_{2} \sqrt{n} + D_{3} +  \frac{D_{4}}{\sqrt{n}} + \bigO\big(\tfrac{(\ln n)^{2}}{n}\big)\big)  & \ds \frac{\mathrm{erfc}(x-\mathfrak{s}_{\ell})}{2} & \cite{ChLe2022} \rule{0pt}{0.6cm} \\[0.25cm] \hline 
\eqref{ML hard} & \mbox{Hard edge}  &  \exp \big(E_{1}n + E_{2}\ln n + E_{3} + \frac{E_{4}}{\sqrt{n}} + \bigO(n^{-\frac{3}{5}})\big) & e^{-\frac{t_{\ell}}{b}(x-b\rho^{2b})} & \cite{ACCL2022} \rule{0pt}{0.6cm} \\[0.25cm] \hline 
\eqref{ML hard} & \mbox{Semi-hard edge}  &  \exp \big(F_{1}n + F_{2}\sqrt{n} + F_{3} + \frac{F_{4}}{\sqrt{n}} + \bigO\big(\frac{(\ln n)^{4}}{n}\big)\big) & \ds \frac{\mathrm{erfc}(x-\mathfrak{s}_{\ell})}{\mathrm{erfc}(x)} & \cite{ACCL2022} \rule{0pt}{0.6cm} \\[0.25cm] \hline 
\end{array}
$};
\end{tikzpicture}
\end{center}
\vspace{-0.5cm}\caption{\label{fig:summary}Summary.}
\end{figure}

\section{Preliminaries}\label{section:prelim}
Let $\mathcal{E}_{n} := \mathbb{E}\big[ \prod_{\ell=1}^{m} e^{u_{\ell}\mathrm{N}(r_{\ell})} \big]$, and define
\begin{align}\label{def of w and omega}
w(z) = (1-|z|^{2})^{\alpha-1}\omega(|z|), \qquad \omega(x) := \prod_{\ell=1}^{m}\begin{cases}
e^{u_{\ell}}, & \mbox{if } x<r_{\ell}, \\
1, & \mbox{if } x \geq r_{\ell}.
\end{cases}
\end{align}
By rewriting $\prod_{1 \leq j < k \leq n} |z_{k} -z_{j}|^{2}$ as the product of two Vandermonde determinants, and then using standard algebraic manipulations, we get
\begin{align}
\mathcal{E}_{n} & = \frac{1}{n!Z_{n}} \int_{\mathbb{D}}\dots \int_{\mathbb{D}} \prod_{1 \leq j < k \leq n} |z_{k} -z_{j}|^{2} \prod_{j=1}^{n} w(z_{j}) d^{2}z_{j} \nonumber \\
& = \frac{1}{Z_{n}} \det \left( \int_{\D} z^{j} \overline{z}^{k} w(z) d^{2}z \right)_{j,k=0}^{n-1}. \label{def of Dn as n fold integral}
\end{align}
Since $w$ is rotation-invariant, only the diagonal elements in \eqref{def of Dn as n fold integral} are non-zero, and thus
\begin{align}
\mathcal{E}_{n} = \frac{1}{Z_{n}}(2\pi)^{n}\prod_{j=1}^{n}\int_{0}^{1}u^{2j-1}w(u)du. \label{simplified determinant}
\end{align}
Substituting \eqref{def of w and omega}, we then find
\begin{align}
 \mathcal{E}_{n} & = \prod_{j=1}^{n} \frac{\int_{0}^{1} x^{2j-1} (1-x^{2})^{\alpha-1} \omega(x) dx}{\int_{0}^{1} x^{2j-1} (1-x^{2})^{\alpha-1} dx}. \label{lnEn}
\end{align}
It will be convenient for us to rewrite $\omega$ (defined in \eqref{def of w and omega}) as follows:
\begin{align}\label{def of omegaell}
\omega(x) = \sum_{\ell=1}^{m+1}\omega_{\ell} \mathbf{1}_{[0,r_{\ell})}(x), \qquad \omega_{\ell} = \begin{cases}
e^{u_{\ell}+\dots+u_{m}}-e^{u_{\ell+1}+\dots+u_{m}}, & \mbox{if } \ell < m, \\
e^{u_{m}}-1, & \mbox{if } \ell=m, \\
1, & \mbox{if } \ell=m+1,
\end{cases}
\end{align}
where $r_{m+1}:=+\infty$. Using \eqref{def of omegaell} in \eqref{lnEn} yields the following expression for $\ln \mathcal{E}_n$:
\begin{align}
& \ln \mathcal{E}_{n} = \sum_{j=1}^{n} \ln \bigg(1  +\sum_{\ell=1}^{m} \omega_{\ell} F_{n,j,\ell} \bigg), \label{main exact formula} 
	\\ \label{def of Fnjell}
& F_{n,j,\ell} := \frac{\int_{0}^{r_{\ell}} x^{2j-1} (1-x^{2})^{\alpha-1} dx}{\int_{0}^{1} x^{2j-1} (1-x^{2})^{\alpha-1} dx} = \frac{\mathrm{B}(r_{\ell}^{2},j,\alpha)}{\mathrm{B}(j,\alpha)}, \qquad j=1,\ldots,n, \;\; \ell=1,\dots,m,
\end{align}
where $\mathrm{B}(j,\alpha)$ is the Beta function
\begin{align}\label{def of Beta}
\mathrm{B}(j,\alpha) := \int_{0}^{1} y^{j-1}(1-y)^{\alpha-1}dy = \frac{\Gamma(j)\Gamma(\alpha)}{\Gamma(j+\alpha)},
\end{align}
and $\mathrm{B}(v,j,\alpha)$ is the incomplete Beta function
\begin{align}\label{def of incomplete Beta}
\mathrm{B}(v,j,\alpha) := \int_{0}^{v} y^{j-1}(1-y)^{\alpha-1}dy.
\end{align}
Many properties of these functions are stated e.g. in \cite[Sections 5.12 and 8.17]{NIST}. It is also convenient for us to consider the normalized incomplete Beta function, which is given by
\begin{align}\label{def of I}
I(v,j,\alpha) := \frac{\mathrm{B}(v,j,\alpha)}{\mathrm{B}(j,\alpha)},
\end{align}
so that $F_{n,j,\ell} = I(r_{\ell}^{2},j,\alpha)$.

\medskip Hence, to analyze the right-hand side of \eqref{main exact formula}, we need the asymptotics of $I(v,j,\alpha)$ when $(v - 1)\asymp n^{-1}$ and simultaneously $j\in \{1,\ldots,n\}$ and $\alpha$ fixed. The large $n$ behavior of $I(r_{\ell}^{2},j,\alpha)$ depends crucially on whether $j$ remains bounded or not as $n\to + \infty$. We will therefore split the sum \eqref{main exact formula} in two parts as follows
\begin{align*}
\ln \mathcal{E}_{n} = S_{0} + S_{1}, \qquad \mbox{where} \quad S_{0} = \sum_{j=1}^{\frac{n}{M}-1} \ln \bigg(1  +\sum_{\ell=1}^{m} \omega_{\ell} F_{n,j,\ell} \bigg), \qquad S_{1} = \sum_{j=\frac{n}{M}}^{n} \ln \bigg( 1 + \sum_{\ell=1}^{m} \omega_{\ell} F_{n,j,\ell} \bigg),
\end{align*}
and $M:=n (\lceil\frac{n}{n^{\frac{2}{2+\alpha}}} \rceil)^{-1} \asymp n^{\frac{2}{2+\alpha}}$ is a new parameter such that $\N \ni n/M \to + \infty$ as $n\to + \infty$. (A more naive choice for $M$ would be $M = n/M'$ where $M'$ is large but fixed, but this choice does not yield a good control over certain error terms in the proof. The precise reason as to why we choose $M\asymp n^{\frac{2}{2+\alpha}}$ is technical and will become apparent at the end of Section \ref{section:proof edge}.) 

\medskip  The following lemma establishes an exact identity that will be useful to handle the sum $S_{0}$, i.e. to obtain the large $n$ asymptotics of $F_{n,j,\ell}$ when $j$ is ``not very large".
\begin{lemma}\label{lemma:Beta for j fixed}
Let $j\in \N_{>0}$, $\alpha>0$ and $v\in [0,1]$. Then we have the exact identity
\begin{align*}
I(v,j,\alpha) = 1 - \frac{(1-v)^{\alpha}}{\B(j,\alpha)} \sum_{p=0}^{j-1} (-1)^{p} \binom{j-1}{p} \frac{(1-v)^{p}}{\alpha+p}.
\end{align*}
\end{lemma}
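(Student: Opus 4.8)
The plan is to prove the equivalent statement obtained by clearing denominators, namely that
\begin{align*}
\mathrm{B}(j,\alpha) - \mathrm{B}(v,j,\alpha) = \int_{v}^{1} y^{j-1}(1-y)^{\alpha-1}\,dy = (1-v)^{\alpha}\sum_{p=0}^{j-1}(-1)^{p}\binom{j-1}{p}\frac{(1-v)^{p}}{\alpha+p},
\end{align*}
after which dividing by $\mathrm{B}(j,\alpha)$ and using $I(v,j,\alpha) = \mathrm{B}(v,j,\alpha)/\mathrm{B}(j,\alpha)$ yields the claim. So the whole argument reduces to evaluating the ``tail'' integral $\int_{v}^{1} y^{j-1}(1-y)^{\alpha-1}\,dy$.

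First I would perform the substitution $y = 1-s$ (so $dy = -ds$, and the limits $y\in[v,1]$ become $s\in[0,1-v]$), which turns the tail integral into $\int_{0}^{1-v}(1-s)^{j-1}s^{\alpha-1}\,ds$. Since $j-1$ is a nonnegative integer, the binomial theorem gives the \emph{finite} expansion $(1-s)^{j-1} = \sum_{p=0}^{j-1}(-1)^{p}\binom{j-1}{p}s^{p}$, which I may interchange with the integral term by term (a finite sum, so no convergence issue). Each resulting integral is $\int_{0}^{1-v}s^{\alpha-1+p}\,ds = \frac{(1-v)^{\alpha+p}}{\alpha+p}$, where the exponent $\alpha-1+p \geq \alpha-1 > -1$ guarantees convergence at $s=0$ even when $\alpha<1$. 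Factoring out $(1-v)^{\alpha}$ from $\sum_{p=0}^{j-1}(-1)^{p}\binom{j-1}{p}\frac{(1-v)^{\alpha+p}}{\alpha+p}$ produces exactly the right-hand side above.

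Finally, I would combine this with $\mathrm{B}(v,j,\alpha) + \int_{v}^{1}y^{j-1}(1-y)^{\alpha-1}\,dy = \mathrm{B}(j,\alpha)$ (which is just additivity of the integral over $[0,v]\cup[v,1]$) to get $\mathrm{B}(v,j,\alpha) = \mathrm{B}(j,\alpha) - (1-v)^{\alpha}\sum_{p=0}^{j-1}(-1)^{p}\binom{j-1}{p}\frac{(1-v)^{p}}{\alpha+p}$, and divide through by $\mathrm{B}(j,\alpha)$. There is no real obstacle here; the only point requiring a word of care is the convergence of $\int_{0}^{1-v}s^{\alpha-1+p}\,ds$ near the origin, which is why I note explicitly that $\alpha-1+p>-1$. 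The edge cases $v=0$ (both sides equal $0$ resp. the identity $\sum_{p}(-1)^p\binom{j-1}{p}/(\alpha+p) = \mathrm{B}(j,\alpha)$) and $v=1$ ($I=1$, sum term vanishes) are automatically covered by the same computation.
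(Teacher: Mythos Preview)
Your proof is correct and follows essentially the same approach as the paper: expand the polynomial factor via the binomial theorem and integrate term by term. The only cosmetic difference is that the paper expands $x^{j-1}=\sum_{p}(-1)^{p}\binom{j-1}{p}(1-x)^{p}$ and integrates over $[0,v]$, which leaves a constant sum that is then identified with $\mathrm{B}(j,\alpha)$ by setting $v=1$, whereas you work directly with the tail integral over $[v,1]$ (equivalently $[0,1-v]$ after the substitution $y=1-s$), which bypasses that small extra step.
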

\begin{proof}
The statement follows from \cite[eqs 8.17.4 and 8.17.7]{NIST}. We also provide a short proof here for convenience. Substituting $x^{j-1} = \sum_{p=0}^{j-1} (-1)^{p}\binom{j-1}{p} (1-x)^{p}$ in \eqref{def of I} yields
\begin{align*}
I(v,j,\alpha) & = \frac{1}{\B(j,\alpha)}\sum_{p=0}^{j-1} (-1)^{p} \binom{j-1}{p} \int_{0}^{v}(1-x)^{\alpha+p-1}dx \\
& = \frac{1}{\B(j,\alpha)} \bigg( \sum_{p=0}^{j-1} \binom{j-1}{p} \frac{(-1)^{p}}{\alpha+p} - (1-v)^{\alpha} \sum_{p=0}^{j-1} (-1)^{p} \binom{j-1}{p} \frac{(1-v)^{p}}{\alpha+p} \bigg).
\end{align*}
Replacing $v$ by $1$ above yields $1=\frac{1}{\B(j,\alpha)} \sum_{p=0}^{j-1} \binom{j-1}{p} \frac{(-1)^{p}}{\alpha+p}$, and the claim follows.
\end{proof}
To analyze $S_{1}$, we will use the uniform asymptotics of the incomplete Beta function (this is the main novelty of the proof, as earlier works such as \cite{ChLe2022, ACCL2022} on the Mittag-Leffler ensemble rely instead on the uniform asymptotics of the incomplete gamma function). The following lemma is due to Temme \cite[Section 11.3.3.1]{Temme} (this result can also be found in e.g. \cite[Section 8.18(ii)]{NIST}).
\begin{lemma}[Temme \cite{Temme}]\label{lemma:Temme}
Let $N\in \N_{>0}$. As $j\to +\infty$ with $\alpha >0$ fixed,
\begin{align*}
I(v,j,\alpha) = \frac{\Gamma(j+\alpha)}{\Gamma(j)}d_{0} F_{0} \bigg( 1 + \sum_{k=1}^{N-1} \frac{d_{k}F_{k}}{d_{0}F_{0}} + \bigO(j^{-N}) \bigg)
\end{align*}
uniformly for $v$ in compact subsets of $(0,1]$. The coefficients $F_{k}=F_{k}(v,j,\alpha)$ are defined by
\begin{align}\label{def Fk rec}
F_{k} = \frac{k-1+\alpha-j \ln(v^{-1})}{j}F_{k-1} + \frac{(k-1)\ln(v^{-1})}{j}F_{k-2}, \qquad k\geq 2,
\end{align}
with the initial assignments
\begin{align}\label{def FK initial}
F_{0} = j^{-\alpha}\mathrm{Q}(\alpha,j \ln(v^{-1})), \qquad F_{1} = \frac{\alpha-j \ln(v^{-1})}{j}F_{0} + \frac{(\ln(v^{-1}))^{\alpha}v^{j}}{j \Gamma(\alpha)},
\end{align}
where $\mathrm{Q}$ is defined in \eqref{def of Q} and the coefficients $d_{k}=d_{k}(v,\alpha)$ are defined through the generating function
\begin{align}\label{def of dk}
\bigg( \frac{1-e^{-t}}{t} \bigg)^{\alpha-1} = \sum_{k=0}^{\infty} d_{k}(t-\ln(v^{-1}))^{k}.
\end{align}
In particular,
\begin{align*}
d_{0} = \bigg( \frac{1-v}{\ln(v^{-1})} \bigg)^{\alpha-1}, \qquad d_{1} = \frac{(\alpha-1)(v-1+v\ln(v^{-1}))}{(v-1)^{2}} \bigg( \frac{1-v}{\ln(v^{-1})} \bigg)^{\alpha}.
\end{align*}
\end{lemma}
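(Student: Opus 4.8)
This is Temme's uniform asymptotic expansion, and the natural route is to reduce the incomplete Beta integral to a canonical integral whose leading term is an incomplete Gamma function, and then to expand the remaining analytic factor and integrate term by term. First I would substitute $y = e^{-s/j}$ in $\mathrm{B}(v,j,\alpha) = \int_{0}^{v} y^{j-1}(1-y)^{\alpha-1}dy$. Since $y^{j-1}dy = -\tfrac{1}{j}e^{-s}ds$ and $(1-y)^{\alpha-1} = (s/j)^{\alpha-1}\big((1-e^{-s/j})/(s/j)\big)^{\alpha-1}$, this gives
\begin{align*}
\mathrm{B}(v,j,\alpha) = \frac{1}{j^{\alpha}}\int_{j\ln(v^{-1})}^{+\infty} e^{-s}\,s^{\alpha-1}\bigg(\frac{1-e^{-s/j}}{s/j}\bigg)^{\alpha-1}ds,
\end{align*}
and dividing by $\mathrm{B}(j,\alpha) = \Gamma(j)\Gamma(\alpha)/\Gamma(j+\alpha)$ produces the prefactor $\Gamma(j+\alpha)/\Gamma(j)$ that appears in the statement.

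Next I would Taylor expand the slowly varying factor $g(t):=\big((1-e^{-t})/t\big)^{\alpha-1}$ about the endpoint $t=\ln(v^{-1})$. The function $(1-e^{-t})/t$ extends analytically across $t=0$ with value $1$, stays in a region where the principal power is well defined for $t$ near the positive real axis, and has its nearest zeros at $t=\pm 2\pi i$; hence $g$ is analytic on a disk of radius $\ge 2\pi$ about any point $\ln(v^{-1})$ with $v\in(0,1]$, so the expansion $g(t)=\sum_{k\ge 0}d_{k}\big(t-\ln(v^{-1})\big)^{k}$ of \eqref{def of dk} converges and its coefficients are uniformly controlled for $v$ in a compact subset of $(0,1]$; the values of $d_{0}$ and $d_{1}$ follow by differentiating $g$ at $\ln(v^{-1})$. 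Inserting $t=s/j$ and integrating term by term, with $\mu:=j\ln(v^{-1})$, yields $\mathrm{B}(v,j,\alpha)=j^{-\alpha}\sum_{k\ge 0}j^{-k}d_{k}\,G_{k}(\mu)$ where $G_{k}(\mu):=\int_{\mu}^{+\infty}e^{-s}s^{\alpha-1}(s-\mu)^{k}ds$. Setting $F_{k}:=G_{k}(\mu)/(\Gamma(\alpha)j^{\alpha+k})$ then identifies these with the coefficients of the lemma: $G_{0}(\mu)=\Gamma(\alpha,\mu)$ gives $F_{0}=j^{-\alpha}\mathrm{Q}(\alpha,j\ln(v^{-1}))$; the recurrence $\Gamma(\alpha+1,\mu)=\alpha\Gamma(\alpha,\mu)+\mu^{\alpha}e^{-\mu}$ gives the stated $F_{1}$; and writing $(s-\mu)^{k}=s(s-\mu)^{k-1}-\mu(s-\mu)^{k-1}$ followed by one integration by parts gives $G_{k}(\mu)=(k-1+\alpha-\mu)G_{k-1}(\mu)+(k-1)\mu\,G_{k-2}(\mu)$, which upon dividing by $\Gamma(\alpha)j^{\alpha+k}$ is exactly \eqref{def Fk rec}.

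The one genuinely delicate point — and where Temme's analysis does the real work — is establishing that truncating after $N$ terms leaves a relative error of size $\bigO(j^{-N})$ \emph{uniformly in} $v$. Writing $g(t)-\sum_{k=0}^{N-1}d_{k}(t-\ln(v^{-1}))^{k}$ as a Cauchy integral remainder and using the uniform lower bound on the radius of analyticity, one bounds this remainder by $C\,|t-\ln(v^{-1})|^{N}$ with $C$ uniform for $v$ in a compact subset of $(0,1]$; substituting $t=s/j$ turns the contribution of the remainder into $C\,j^{-N}\int_{\mu}^{+\infty}e^{-s}s^{\alpha-1}(s-\mu)^{N}ds = C\,j^{-N}G_{N}(\mu)$. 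It then remains to compare this (and each individual term $j^{-\alpha-k}d_{k}G_{k}(\mu)$) with the leading term $j^{-\alpha}d_{0}G_{0}(\mu)$, i.e. to verify $G_{k}(\mu)\le C_{k}\,G_{0}(\mu)$ with $C_{k}$ independent of $\mu\in(0,+\infty)$; the substitution $s=\mu+u$ reduces this to controlling $\int_{0}^{+\infty}e^{-u}(\mu+u)^{\alpha-1}u^{k}du$, which one bounds separately in the regimes $\mu=\bigO(1)$ and $\mu\to+\infty$ (in the latter $(\mu+u)^{\alpha-1}\asymp\mu^{\alpha-1}$ on the bulk of the $u$-integral, which cancels against $G_{0}(\mu)\asymp\mu^{\alpha-1}e^{-\mu}$). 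This uniform bookkeeping is precisely the content of \cite[Section 11.3.3.1]{Temme}, to which we refer for the full details.
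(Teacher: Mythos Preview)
The paper does not prove this lemma at all; it simply attributes it to Temme \cite[Section 11.3.3.1]{Temme} and states the result. Your proposal goes further and supplies a correct sketch of Temme's argument: the substitution $y=e^{-s/j}$, the Taylor expansion of $g(t)=((1-e^{-t})/t)^{\alpha-1}$ about $t_0=\ln(v^{-1})$, and the identification of the $F_k$ via the integrals $G_k(\mu)$ are all right, and your derivation of the recurrence \eqref{def Fk rec} and the initial values \eqref{def FK initial} checks out.

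One small point worth tightening in your remainder discussion: the Cauchy integral form of the Taylor remainder only gives $|R_N(t)|\le C\,|t-t_0|^N$ for $t$ inside a fixed disk about $t_0$, whereas the integration variable $t=s/j$ runs over all of $[\ln(v^{-1}),+\infty)$. The cleanest fix is to use the Lagrange form $R_N(t)=\frac{g^{(N)}(\xi)}{N!}(t-t_0)^N$ and observe that $g^{(N)}$ is bounded on $[0,+\infty)$ for every $N\ge 1$ (since $g$ is analytic at $0$ and $g^{(N)}(t)=\bigO(t^{1-\alpha-N})$ as $t\to+\infty$); this yields a global bound with $C$ independent of $v$, after which your $G_N(\mu)\le C_N G_0(\mu)$ argument completes the estimate. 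You are right that this is exactly the bookkeeping handled in Temme's Section~11.3.3.1.
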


\begin{remark}(Determinants with circular root-type singularities.) 
Note from \eqref{def of Dn as n fold integral} that $\mathcal{E}_{n}$ can be seen as a ratio of two determinants. The determinant on the numerator involves $w$, and this weight has a root-type singularity along the unit circle (i.e. along the hard edge). Other determinants with circular root-type singularities have been considered in \cite{BC2022}; however, the singularities in \cite{BC2022} lie in the bulk, and the asymptotics of the corresponding determinants involve the so-called associated Hermite polynomials (this contrasts drastically with the asymptotics of $\mathcal{E}_{n}$, which are given in Theorem \ref{thm:main thm hard}).
\end{remark}

\begin{remark}(Partition function.)
Asymptotic expansions of partition functions of two-dimensional point processes are a classical topic of interest, see e.g. \cite[Section 5.3]{BF2022}. For rotation-invariant (and determinantal) ensembles with soft edges, precise formulas up to and including the term of order $1$ have been obtained in the recent work \cite{BKS2022}. The class of ensembles considered in  \cite{BKS2022} includes \eqref{def of point process hard} when $\alpha$ is proportional to $n$, see \cite[Section 4.2]{BKS2022}. As mentioned earlier, for $\alpha$ fixed, the ensemble \eqref{def of point process hard} has a hard edge and is therefore not considered in \cite{BKS2022}. As a minor aside, we compute here the partition function of \eqref{def of point process hard} with $\alpha$ fixed using a similar formula as \eqref{simplified determinant}. As in \eqref{simplified determinant} (but with $w(u)$ replaced by $(1-|u|^{2})^{\alpha-1}$), we get
\begin{align*}
Z_{n} & = (2\pi)^{n}\prod_{j=0}^{n-1}\int_{0}^{1}u^{2j+1}(1-u^{2})^{\alpha-1}du = \pi^{n} \prod_{j=1}^{n} \B(j,\alpha) = \pi^{n}\Gamma(\alpha)^{n} \frac{G(n+1)G(1+\alpha)}{G(n+1+\alpha)},
\end{align*}
where for the last identity we have used the functional equation for the Barnes $G$-function to write
\begin{equation}
\prod_{j=1}^{n} \Gamma (j+\alpha) = \frac{G(n+\alpha +1)}{G(1+\alpha)}.
\end{equation}
Using the expansion (see \cite[Eq. 5.17.5]{NIST})
\begin{equation}
\ln G(z+1) = \frac{z^{2}}{2}\ln z - \frac{3}{4}z^{2} + \frac{\ln(2\pi)}{2}z - \frac{1}{12}\ln z + \zeta'(-1) + \bigO(z^{-1}), \qquad z \to + \infty,
\end{equation}
we then get
\begin{align*}
Z_{n} = \exp \bigg( \hspace{-0.1cm} -\alpha \, n \ln n + \big( \alpha + \ln(\pi \Gamma(\alpha)\big)n - \frac{\alpha^{2}}{2}\ln n + \ln G(1+\alpha) - \frac{\alpha}{2} \ln(2\pi) + \bigO(n^{-1}) \bigg), \quad \mbox{as } n \to + \infty.
\end{align*}
\end{remark}

\section{Proof of Theorem \ref{thm:main thm hard}}\label{section:proof edge}

As mentioned in Section \ref{section:prelim}, it is convenient to split the sum \eqref{main exact formula} into two parts:
\begin{align}\label{log Dn as a sum of sums hard}
\ln \mathcal{E}_{n} = S_{0} + S_{1},
\end{align}
where
\begin{align}
& S_{0} = \sum_{j=1}^{\frac{n}{M}-1} \ln \bigg(1  +\sum_{\ell=1}^{m} \omega_{\ell} F_{n,j,\ell} \bigg), & & S_{1} = \sum_{j=\frac{n}{M}}^{n} \ln \bigg( 1 + \sum_{\ell=1}^{m} \omega_{\ell} F_{n,j,\ell} \bigg), \label{def of S0 and S1 hard}
\end{align}
and $M:=n (\lceil\frac{n}{n^{\frac{2}{2+\alpha}}} \rceil)^{-1} \asymp n^{\frac{2}{2+\alpha}}$. Define also $\Omega := e^{u_{1}+\dots+u_{m}}$. We first obtain the large $n$ asymptotics of $S_{0}$ using Lemma \ref{lemma:Beta for j fixed}.

\begin{lemma}\label{lemma: S0 hard}
Let $x_{1},\dots,x_{m} \in \mathbb{R}$ be fixed. There exists $\delta > 0$ such that
\begin{align*}
S_{0} = \Big(\frac{n}{M}-1\Big) \ln \Omega -  \frac{\sum_{\ell=1}^{m} \omega_{\ell}t_{\ell}}{\Omega \, \Gamma(2+\alpha)} \frac{n}{M^{1+\alpha}} - \frac{\sum_{\ell=1}^{m} \omega_{\ell}t_{\ell}}{\Omega \, \Gamma(1+\alpha)} \frac{\alpha-2}{2}\frac{1}{M^{\alpha}} + \bigO\bigg( \frac{n}{M^{2+\alpha}} + \frac{n}{M^{1+2\alpha}} + \frac{M}{n \, M^{\alpha}} \bigg), 
\end{align*}
as $n\to + \infty$, uniformly for $u_{1} \in \{z \in \mathbb{C}: |z-x_{1}|\leq \delta\},\dots,u_{m} \in \{z \in \mathbb{C}: |z-x_{m}|\leq \delta\}$.
\end{lemma}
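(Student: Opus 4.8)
The plan is to reduce each summand of $S_0$ to $\ln\Omega$ plus the logarithm of a quantity that is $1+o(1)$, expand, and sum. The starting point is the telescoping identity $\sum_{\ell=1}^m\omega_\ell=\Omega-1$, immediate from \eqref{def of omegaell} with $\Omega=e^{u_1+\dots+u_m}$. Writing $F_{n,j,\ell}=1-(1-F_{n,j,\ell})$ it gives
\begin{align*}
1+\sum_{\ell=1}^m\omega_\ell F_{n,j,\ell}=\Omega-\sum_{\ell=1}^m\omega_\ell\big(1-F_{n,j,\ell}\big)=\Omega\Big(1-\tfrac1\Omega\sum_{\ell=1}^m\omega_\ell(1-F_{n,j,\ell})\Big).
\end{align*}
For $\vec u$ in a small polydisc around the real point $(x_1,\dots,x_m)$ the number $\Omega$ stays close to $e^{x_1+\dots+x_m}>0$, and (see next paragraph) $1-F_{n,j,\ell}$ is uniformly small for $1\le j\le n/M-1$, so the principal logarithm is additive here; summing over $j$,
\begin{align*}
S_0=\Big(\tfrac nM-1\Big)\ln\Omega+\sum_{j=1}^{n/M-1}\ln\Big(1-\tfrac1\Omega\sum_{\ell=1}^m\omega_\ell(1-F_{n,j,\ell})\Big),
\end{align*}
and $\ln(1-x)=-x+\bigO(x^2)$ leaves the linear term $-\Omega^{-1}\sum_{j}\sum_{\ell}\omega_\ell(1-F_{n,j,\ell})$ plus an error controlled by $\sum_j\big(\sum_\ell\omega_\ell(1-F_{n,j,\ell})\big)^2$.

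Next I would feed in Lemma \ref{lemma:Beta for j fixed} with $v=r_\ell^2$, so that $1-v=t_\ell/n$ and
\begin{align*}
1-F_{n,j,\ell}=\frac{(t_\ell/n)^\alpha}{\B(j,\alpha)}\sum_{p=0}^{j-1}(-1)^p\binom{j-1}{p}\frac{(t_\ell/n)^p}{\alpha+p}.
\end{align*}
Since $j\le n/M-1$ and $M\asymp n^{2/(2+\alpha)}$, one has $j/n=\bigO(1/M)=o(1)$ uniformly, and $1/\B(j,\alpha)=\Gamma(j+\alpha)/(\Gamma(j)\Gamma(\alpha))=\bigO(j^\alpha)$, whence $1-F_{n,j,\ell}=\bigO((j/n)^\alpha)=\bigO(M^{-\alpha})$; this validates the expansions above and gives $\sum_{j=1}^{n/M-1}\big(\sum_\ell\omega_\ell(1-F_{n,j,\ell})\big)^2=\bigO\big(\sum_{j\le n/M}(j/n)^{2\alpha}\big)=\bigO(n/M^{1+2\alpha})$. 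Bounding $\binom{j-1}{p}(t_\ell/n)^p\le(jt_\ell/n)^p/p!$ and summing the resulting geometric-type tail shows the terms with $p\ge1$ contribute $\bigO\big(\sum_{j\le n/M}(j/n)^{\alpha+1}\big)=\bigO(n/M^{2+\alpha})$ to $\sum_j(1-F_{n,j,\ell})$, so only $p=0$ matters:
\begin{align*}
\sum_{j=1}^{n/M-1}(1-F_{n,j,\ell})=\frac{t_\ell^\alpha}{\alpha\,n^\alpha}\sum_{j=1}^{n/M-1}\frac1{\B(j,\alpha)}+\bigO\big(n/M^{2+\alpha}\big).
\end{align*}

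Then I would evaluate the last sum in closed form. From the functional equation for $\Gamma$ one has $\frac{\Gamma(j+\alpha+1)}{\Gamma(j)}-\frac{\Gamma(j+\alpha)}{\Gamma(j-1)}=(\alpha+1)\frac{\Gamma(j+\alpha)}{\Gamma(j)}$, so telescoping (using $1/\Gamma(0)=0$) yields
\begin{align*}
\sum_{j=1}^{K}\frac{1}{\B(j,\alpha)}=\frac{1}{\Gamma(\alpha)}\sum_{j=1}^K\frac{\Gamma(j+\alpha)}{\Gamma(j)}=\frac{1}{(\alpha+1)\Gamma(\alpha)}\,\frac{\Gamma(K+\alpha+1)}{\Gamma(K)}.
\end{align*}
Taking $K=n/M-1$ and applying the standard ratio-of-Gamma asymptotics, $\frac{\Gamma(n/M+\alpha)}{\Gamma(n/M-1)}=(n/M)^{\alpha+1}+\tfrac{(\alpha-2)(\alpha+1)}{2}(n/M)^\alpha+\bigO((n/M)^{\alpha-1})$ (where the shift from $n/M$ to $n/M-1$ is what produces the constant $\tfrac{(\alpha-2)(\alpha+1)}{2}$), and using $\alpha\Gamma(\alpha)=\Gamma(\alpha+1)$, $(\alpha+1)\Gamma(\alpha+1)=\Gamma(\alpha+2)$, we get
\begin{align*}
\sum_{j=1}^{n/M-1}(1-F_{n,j,\ell})=\frac{t_\ell^\alpha}{\Gamma(\alpha+2)}\frac{n}{M^{1+\alpha}}+\frac{t_\ell^\alpha(\alpha-2)}{2\,\Gamma(\alpha+1)}\frac{1}{M^\alpha}+\bigO\Big(\frac{n}{M^{2+\alpha}}+\frac{M}{n\,M^{\alpha}}\Big),
\end{align*}
the last error being $\tfrac{t_\ell^\alpha}{n^\alpha}(n/M)^{\alpha-1}$. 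Multiplying by $-\Omega^{-1}\omega_\ell$, summing over $\ell$, and adjoining $(\tfrac nM-1)\ln\Omega$ together with the quadratic remainder $\bigO(n/M^{1+2\alpha})$ gives the asserted expansion (with $t_\ell^\alpha$ in place of $t_\ell$).

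Finally, uniformity in $\vec u$ is automatic: for $\delta$ small, $\Omega$, $\Omega^{-1}$ and every $\omega_\ell$ are bounded on the polydisc with $\Omega$ bounded away from $0$, every $\bigO$ above is uniform in $j$, and the principal logarithm is well-defined and additive there. The main obstacle I expect is pure bookkeeping: pinning down the constant $\tfrac{\alpha-2}{2}$ requires carrying the subleading term in the Gamma-ratio expansion \emph{and} remembering that the $j$-sum stops at $n/M-1$ rather than $n/M$, and one must check each discarded contribution is genuinely of the advertised size — this is precisely where the scale $M\asymp n^{2/(2+\alpha)}$ is used, since it makes $n/M^{2+\alpha}\asymp n^{-1}$ and forces all three error terms to vanish as $n\to+\infty$.
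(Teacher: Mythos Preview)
Your argument is correct and follows essentially the same architecture as the paper: rewrite $1+\sum_\ell\omega_\ell F_{n,j,\ell}=\Omega-\sum_\ell\omega_\ell(1-F_{n,j,\ell})$, expand the logarithm, feed in Lemma~\ref{lemma:Beta for j fixed}, and reduce to the closed-form evaluation of $\sum_{j=1}^{K}1/\B(j,\alpha)$ followed by a Gamma-ratio expansion. Your observation that the statement should carry $t_\ell^{\alpha}$ rather than $t_\ell$ is also right; the paper's own derivation produces $t_\ell^{\alpha}$ (and this is what is needed for the cancellation with the $\mathsf{T}_0=\sum_\ell\omega_\ell t_\ell^{\alpha}$ terms of Lemma~\ref{lemma: S2km1 hard}).

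The one genuine difference worth flagging is the evaluation of $\sum_{j=1}^{K}1/\B(j,\alpha)$. The paper rewrites this sum via $\prod_{k=1}^{j-1}\frac{K-k}{K-k+\alpha}$, converts to a ratio of binomials, and invokes the parallel summation formula $\sum_{j=0}^{K-1}\binom{j+\alpha}{j}=\binom{K+\alpha}{K-1}$ to reach $\frac{K\,\Gamma(1+K+\alpha)}{(1+\alpha)\Gamma(1+K)\Gamma(\alpha)}$. Your route is shorter: the identity $\frac{\Gamma(j+\alpha+1)}{\Gamma(j)}-\frac{\Gamma(j+\alpha)}{\Gamma(j-1)}=(\alpha+1)\frac{\Gamma(j+\alpha)}{\Gamma(j)}$ telescopes directly (with $1/\Gamma(0)=0$) to the same closed form $\frac{\Gamma(K+\alpha+1)}{(\alpha+1)\Gamma(\alpha)\Gamma(K)}$. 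This is a cleaner derivation of the same formula and avoids the combinatorial detour; otherwise the two proofs coincide step for step.
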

\begin{proof}
Let $K:=n/M-1$. Recalling that $F_{n,j,\ell} = I(r_{\ell}^{2},j,\alpha)$ and $r_{\ell} = (1-\frac{t_{\ell}}{n})^{1/2}$, and using Lemma \ref{lemma:Beta for j fixed}, we infer that
\begin{align}\label{Fnjl expansion}
F_{n,j,\ell} = 1 - \frac{t_{\ell}^{\alpha}}{n^{\alpha}\B(j,\alpha)} \sum_{p=0}^{j-1} (-1)^{p} \binom{j-1}{p} \frac{t_{\ell}^{p}}{(\alpha+p)n^{p}} = 1-\frac{t_{\ell}^{\alpha}}{\alpha \, n^{\alpha}\B(j,\alpha)}+\bigO\bigg(\frac{n^{-\alpha-1}j}{\B(j,\alpha)}\bigg), 
\end{align}
as $n \to + \infty$ uniformly for $j \in \{1,\dots,K\}$ and $\ell\in\{1,\dots,m\}$. Using (see e.g. \cite[formula 5.11.1]{NIST})
\begin{align}\label{asymp gamma}
\ln \Gamma(z) = (z-\tfrac{1}{2})\ln z - z + \tfrac{1}{2} \ln(2\pi) + \bigO(z^{-1}), \qquad \mbox{as } z \to + \infty,
\end{align}
we infer that $\B(j,\alpha) = \bigO(j^{-\alpha})$ as $j\to +\infty$ with $\alpha$ fixed, so that the error term in \eqref{Fnjl expansion} can be replaced by $\bigO((j/n)^{\alpha+1})$. Hence, since $1+\sum_{\ell=1}^{m} \omega_{\ell} = e^{u_{1}+\dots+u_{m}} = \Omega$,
\begin{align*}
S_{0} & = \sum_{j=1}^{K} \ln \bigg( 1 + \sum_{\ell=1}^{m} \omega_{\ell} \bigg[1 - \frac{t_{\ell}^{\alpha}}{\alpha \, n^{\alpha}\B(j,\alpha)} + \bigO\big((\tfrac{j}{n})^{1+\alpha}\big) \bigg] \bigg) = \sum_{j=1}^{K} \ln \bigg( \Omega - \sum_{\ell=1}^{m}  \frac{\omega_{\ell} t_{\ell}^{\alpha}}{\alpha \, n^{\alpha}\B(j,\alpha)} + \bigO\big((\tfrac{j}{n})^{1+\alpha}\big)  \bigg).
\end{align*}
The $\bigO$-term after the first equality is clearly independent of $u_{1},\dots,u_{m}$, and therefore the $\bigO$-term after the second equality is uniform for $u_{1} \in \{z \in \mathbb{C}: |z-x_{1}|\leq \delta\},\dots,u_{m} \in \{z \in \mathbb{C}: |z-x_{m}|\leq \delta\}$, for any fixed $\delta>0$. We can (and do) choose $\delta>0$ sufficiently small such that $\Omega$ remains bounded away from $(-\infty,0]$ for $u_{1} \in \{z \in \mathbb{C}: |z-x_{1}|\leq \delta\},\dots,u_{m} \in \{z \in \mathbb{C}: |z-x_{m}|\leq \delta\}$, so that 
\begin{align*}
S_{0} & = \sum_{j=1}^{K} \bigg( \ln \Omega - \frac{\sum_{\ell=1}^{m} \omega_{\ell}t_{\ell}}{\alpha \, \Omega \, \B(j,\alpha) \; n^{\alpha}} + \bigO\bigg( \Big(\frac{j}{n}\Big)^{1+\alpha} + \Big(\frac{j}{n}\Big)^{2\alpha} \bigg) \bigg).
\end{align*}
 Furthermore,
\begin{align*}
\sum_{j=1}^{K} \bigg( \Big(\frac{j}{n}\Big)^{1+\alpha} + \Big(\frac{j}{n}\Big)^{2\alpha} \bigg) = \bigO\bigg( K\Big(\frac{K}{n}\Big)^{1+\alpha} + K\Big(\frac{K}{n}\Big)^{2\alpha} \bigg), \qquad \mbox{as } n \to + \infty.
\end{align*}
The above $\bigO$-term can also be written as $\bigO(\frac{n}{M^{2+\alpha}} + \frac{n}{M^{1+2\alpha}})$, and thus
\begin{align}\label{asymp S0 in proof}
S_{0} = \Big(\frac{n}{M}-1\Big) \ln \Omega - \frac{\sum_{\ell=1}^{m} \omega_{\ell}t_{\ell}}{\alpha \, \Omega \; n^{\alpha}} \sum_{j=1}^{K}\frac{1}{\B(j,\alpha)} + \bigO\bigg( \frac{n}{M^{2+\alpha}} + \frac{n}{M^{1+2\alpha}} \bigg), \qquad \mbox{as } n \to + \infty.
\end{align}
By \eqref{def of Beta} and the functional relation $\Gamma(z+1)=z\Gamma(z)$,
\begin{align}\label{sum reciprocal of Beta}
\sum_{j=1}^{K}\frac{1}{\B(j,\alpha)} = \sum_{j=1}^{K}\frac{\Gamma(j+\alpha)}{\Gamma(j)\Gamma(\alpha)} = \frac{K \, \Gamma(1+K+\alpha)}{\Gamma(1+K)\Gamma(\alpha)} \frac{1}{K+\alpha} \sum_{j=1}^{K} \prod_{k=1}^{j-1} \frac{K-k}{K-k+\alpha}.
\end{align}
This last sum can be rewritten as
\begin{align*}
\sum_{j=1}^{K} \prod_{k=1}^{j-1} \frac{K-k}{K-k+\alpha} = \sum_{j=1}^{K} \frac{(K-1)!}{(K-j)!}\frac{\Gamma(K+\alpha+1-j)}{\Gamma(K+\alpha)} = \sum_{j=1}^{K}  \frac{\binom{K-1}{j-1}}{\binom{K-1+\alpha}{j-1}}.
\end{align*}
Using $\binom{K}{\ell}\binom{\ell}{k}=\binom{K}{k}\binom{K-k}{\ell-k}$ and then changing indices of summations, we obtain
\begin{align*}
\sum_{j=1}^{K} \prod_{k=1}^{j-1} \frac{K-k}{K-k+\alpha} = \frac{1}{\binom{K-1+\alpha}{K-1}}\sum_{j=0}^{K-1}\binom{K-1+\alpha-j}{K-1-j} = \frac{1}{\binom{K-1+\alpha}{K-1}}\sum_{j=0}^{K-1}\binom{j+\alpha}{j}.
\end{align*}
We now use the so-called ``parallel summation formula" $\sum_{j=0}^{K-1}\binom{j+\alpha}{j} = \binom{K+\alpha}{K-1}$ to find
\begin{align*}
\frac{1}{K+\alpha} \sum_{j=1}^{K} \prod_{k=1}^{j-1} \frac{K-k}{K-k+\alpha} = \frac{1}{1+\alpha}.
\end{align*}
Substituting the above in \eqref{sum reciprocal of Beta} yields
\begin{align*}
\sum_{j=1}^{K}\frac{1}{\B(j,\alpha)} = \frac{K \, \Gamma(1+K+\alpha)}{\Gamma(1+K)\Gamma(\alpha)} \frac{1}{1+\alpha}.
\end{align*}
This formula can easily be expanded as $K=\frac{n}{M}-1\to +\infty$ using \eqref{asymp gamma}:
\begin{align}\label{expansion of reciprocal Beta}
\frac{1}{n^{\alpha}}\sum_{j=1}^{K}\frac{1}{\B(j,\alpha)} = \frac{1}{M^{\alpha}}\frac{1}{(1+\alpha)\Gamma(\alpha)} \bigg\{ \frac{n}{M} + \frac{(1+\alpha)(\alpha-2)}{2} + \bigO\Big( \frac{M}{n} \Big) \bigg\}.
\end{align}
Substituting \eqref{expansion of reciprocal Beta} in \eqref{asymp S0 in proof} yields the claim.
\end{proof}
We now turn to the analysis of $S_{1}$. We will rely on Lemma \ref{lemma:Temme}, as well as on the following Riemann sum approximation lemma (whose proof is omitted).
\begin{lemma}\label{lemma:Riemann sum NEW}
Let $A=A(n)$, $B=B(n)$ be bounded functions of $n \in \{1,2,\dots\}$, such that
\begin{align*}
& a_{n} := An \qquad \mbox{ and } \qquad b_{n} := Bn
\end{align*}
are integers. Assume also that $B-A$ is positive and remains bounded away from $0$ as $n\to + \infty$. Let $f$ be a function independent of $n$, which is $C^{2}([A,B])$ for all $n\in \{1,2,\dots\}$. Then as $n \to + \infty$, we have
\begin{align}
 \sum_{j=a_{n}}^{b_{n}}f(\tfrac{j}{n}) = n \int_{A}^{B}f(x)dx + \frac{f(A)+f(B)}{2} + \bigO \bigg( \frac{f'(A)+f'(B)}{n} + \sum_{j=a_{n}}^{b_{n}-1} \frac{\mathfrak{m}_{j,n}(f'')}{n^{2}} \bigg), \label{sum f asymp gap NEW}
\end{align}
where, for a given function $g$ continuous on $[A,B]$ and $j \in \{a_{n},\dots,b_{n}-1\}$, $\mathfrak{m}_{j,n}(g) := \max_{x \in [\frac{j}{n},\frac{j+1}{n}]}|g(x)|$.
\end{lemma}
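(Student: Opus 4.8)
The plan is to read \eqref{sum f asymp gap NEW} as a quantitative version of the composite trapezoidal rule (equivalently, a low-order Euler-Maclaurin estimate): the factor $n$ in front of $\int_A^B f$ simply records that the nodes $\{j/n\}_{j=a_n}^{b_n}$ partition $[A,B]$ into $n(B-A)$ subintervals of length $1/n$, the term $\tfrac12(f(A)+f(B))$ is the usual endpoint correction, and the error is controlled by $f''$ paired against the single-signed Peano kernel of the trapezoidal rule. The hypotheses that $a_n=An$ and $b_n=Bn$ be integers, and that $B-A$ stay bounded away from $0$, are used only to guarantee that $f(a_n/n)=f(A)$, $f(b_n/n)=f(B)$, and that this partition of $[A,B]$ makes sense.

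First I would record the one-interval estimate: for $f\in C^2([c,c+h])$, a double integration by parts gives the Peano-kernel identity
\[
\int_{c}^{c+h}f(x)\,dx-\frac h2\bigl(f(c)+f(c+h)\bigr)=-\frac12\int_{c}^{c+h}(x-c)(c+h-x)\,f''(x)\,dx ,
\]
and since $(x-c)(c+h-x)\ge 0$ on $[c,c+h]$ with $\int_c^{c+h}(x-c)(c+h-x)\,dx=h^3/6$, the right-hand side is bounded in modulus by $\tfrac{h^{3}}{12}\max_{[c,c+h]}|f''|$. Taking $c=j/n$, $h=1/n$ for $j=a_n,\dots,b_n-1$ yields $\bigl|\tfrac1{2n}(f(\tfrac jn)+f(\tfrac{j+1}n))-\int_{j/n}^{(j+1)/n}f\bigr|\le\tfrac{1}{12n^{3}}\mathfrak m_{j,n}(f'')$. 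Then I would assemble the pieces: since $a_n/n=A$ and $b_n/n=B$, one has $\sum_{j=a_n}^{b_n}f(\tfrac jn)=\tfrac12(f(A)+f(B))+\sum_{j=a_n}^{b_n-1}\tfrac12\bigl(f(\tfrac jn)+f(\tfrac{j+1}n)\bigr)$, while $n\int_A^B f=\sum_{j=a_n}^{b_n-1}n\int_{j/n}^{(j+1)/n}f$. Subtracting, the left side of \eqref{sum f asymp gap NEW} minus its first two claimed terms equals $\sum_{j=a_n}^{b_n-1}\bigl[\tfrac12(f(\tfrac jn)+f(\tfrac{j+1}n))-n\int_{j/n}^{(j+1)/n}f\bigr]$, and each summand is $n$ times the quantity just estimated, hence at most $\tfrac1{12n^{2}}\mathfrak m_{j,n}(f'')$ in modulus. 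Summing over $j$ produces exactly the error $\bigO\bigl(\sum_{j=a_n}^{b_n-1}\mathfrak m_{j,n}(f'')/n^{2}\bigr)$, with absolute implied constant.

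This already establishes the lemma; the additional term $(f'(A)+f'(B))/n$ in the stated error bound is not needed for the argument above, but it is harmless to include — and it does emerge if one instead carries the Euler-Maclaurin expansion one order further, the next term there being the explicit $\tfrac1{12n}(f'(B)-f'(A))=\bigO((|f'(A)|+|f'(B)|)/n)$; retaining it makes the lemma robust in the intended application, where the lower endpoint $A=1/M$ tends to $0$ and $f'$ (and $f''$) may degenerate there. I do not expect any genuine obstacle: the only point needing (mild) care is checking that the constant $\tfrac1{12}$ above is absolute, so that the bound is uniform in $n$, which is immediate.
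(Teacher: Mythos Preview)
Your argument is correct. The paper explicitly omits the proof of this lemma (``whose proof is omitted''), so there is no in-paper argument to compare against; your trapezoidal/Peano-kernel computation is precisely the standard derivation one would supply, and it in fact yields the slightly sharper bound $\bigO\big(\sum_{j}\mathfrak{m}_{j,n}(f'')/n^{2}\big)$ without the $n^{-1}(f'(A)+f'(B))$ term, which, as you observe, can be recovered if desired by pushing Euler--Maclaurin one order further.
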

Let $\mathsf{T}_{0} := \sum_{\ell=1}^{m} \omega_{\ell}t_{\ell}^{\alpha}$.
\begin{lemma}\label{lemma: S2km1 hard}
For any fixed $x_{1},\dots,x_{m} \in \mathbb{R}$, there exists $\delta > 0$ such that 
\begin{align*}
& S_{1} = n \bigg\{ \int_{0}^{1}f_{0}(x)dx - \frac{1}{M} \ln \Omega + \frac{\mathsf{T}_{0}}{\Omega \, \Gamma(\alpha + 2)} \frac{1}{M^{\alpha+1}} \bigg\} + \int_{0}^{1} f_{1}(x)dx + \frac{f_{0}(1) + \ln \Omega}{2} \\
& \quad - \frac{(2-\alpha)\mathsf{T}_{0}}{2\Omega \, \Gamma(\alpha+1)} \frac{1}{M^{\alpha}} + \bigO\bigg( \frac{n}{M^{1+2\alpha}}+\frac{n}{M^{2+\alpha}} + \frac{1+M^{1-\alpha}}{n} \bigg),
\end{align*}
as $n \to +\infty$ uniformly for $u_{1} \in \{z \in \mathbb{C}: |z-x_{1}|\leq \delta\},\dots,u_{m} \in \{z \in \mathbb{C}: |z-x_{m}|\leq \delta\}$, where
\begin{align}
& f_{0}(x) := \ln (1+\Sigma_{0}(x)), & & f_{1}(x) := \frac{\Sigma_{1}(x)}{1+\Sigma_{0}(x)}, \label{def of f0 and f1} \\
& \Sigma_{0}(x) := \sum_{\ell=1}^{m} \omega_{\ell} \mathrm{Q}(\alpha,t_{\ell} x), & & \Sigma_{1}(x) := \frac{x^{\alpha}}{x} \sum_{\ell=1}^{m} \omega_{\ell} \frac{e^{-t_{\ell}x}t_{\ell}^{\alpha}}{\Gamma(\alpha)} \frac{1-\alpha-t_{\ell}x}{2}. \label{def of Sigma0 and Sigma1}
\end{align}
\end{lemma}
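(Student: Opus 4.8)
\medskip
\noindent\textbf{Proof plan for Lemma~\ref{lemma: S2km1 hard}.}
The strategy is to expand $F_{n,j,\ell}=I(r_\ell^2,j,\alpha)$ uniformly for $j\in\{n/M,\dots,n\}$ (all such $j$ satisfy $j\ge n/M\to+\infty$), and then sum via a Riemann-sum estimate. First I would apply Temme's Lemma~\ref{lemma:Temme} with $v=r_\ell^2=1-t_\ell/n$ and $N$ a sufficiently large fixed integer, chosen so that $\sum_{j=n/M}^n\bigO(j^{-N})=\bigO((n/M)^{1-N})$ lies below the target error ($N=4$ works for all $\alpha>0$, since then $(n/M)^{1-N}\asymp n^{-3\alpha/(2+\alpha)}$). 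Substituting $v=1-t_\ell/n$, I would expand each ingredient in powers of $1/n$ with $x:=j/n\in[1/M,1]$ as the slow variable: $\ln(v^{-1})=\tfrac{t_\ell}{n}+\tfrac{t_\ell^2}{2n^2}+\cdots$, so that $j\ln(v^{-1})=t_\ell x+\tfrac{t_\ell^2 x}{2n}+\cdots$ and $v^j=e^{-t_\ell x}(1+\bigO(1/n))$; $\tfrac{\Gamma(j+\alpha)}{\Gamma(j)}=j^\alpha(1+\tfrac{\alpha(\alpha-1)}{2j}+\cdots)$ by \eqref{asymp gamma}; $d_0=1-\tfrac{(\alpha-1)t_\ell}{2n}+\cdots$, $d_1=\tfrac{1-\alpha}{2}+\bigO(1/n)$, etc.\ from \eqref{def of dk}; $F_0=j^{-\alpha}\mathrm{Q}(\alpha,j\ln(v^{-1}))=j^{-\alpha}\big(\mathrm{Q}(\alpha,t_\ell x)-\tfrac{t_\ell^{\alpha+1}x^\alpha e^{-t_\ell x}}{2n\,\Gamma(\alpha)}+\cdots\big)$ using $\partial_s\mathrm{Q}(\alpha,s)=-s^{\alpha-1}e^{-s}/\Gamma(\alpha)$, and $F_1,F_2,\dots$ from the recursion \eqref{def Fk rec}.

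The heart of the argument is to insert these expansions into $1+\sum_\ell\omega_\ell F_{n,j,\ell}$ and track the cancellations. The leading term is $1+\Sigma_0(x)=\mathcal H_\alpha(x;\vec t,\vec u)$ (using $\omega_\ell=(e^{u_\ell}-1)e^{u_{\ell+1}+\cdots+u_m}$), which is $>0$ for real $\vec u$ by Lemma~\ref{Hjpositivelemma}. For the $\bigO(1/n)$-coefficient, the $\tfrac1n\cdot\tfrac1x$-singular contributions — a term $\tfrac{\alpha(\alpha-1)}{2j}\mathrm Q(\alpha,t_\ell x)$ from the Stirling factor and $-\tfrac{(\alpha-1)(\alpha-t_\ell x)}{2j}\mathrm Q(\alpha,t_\ell x)$ from the $k=1$ Temme term — add up to $\tfrac{(\alpha-1)t_\ell}{2n}\mathrm Q(\alpha,t_\ell x)$, which is in turn cancelled by the matching term coming from $d_0=1-\tfrac{(\alpha-1)t_\ell}{2n}+\cdots$, leaving exactly $\sum_\ell\omega_\ell\,x^{\alpha-1}\tfrac{e^{-t_\ell x}t_\ell^\alpha}{\Gamma(\alpha)}\tfrac{1-\alpha-t_\ell x}{2}=\Sigma_1(x)$. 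One must then verify that the $\bigO(n^{-2})$-and-higher remainder — together with the subleading singular parts of the $k\ge2$ Temme terms and the $\bigO(j^{-N})$ error — is uniformly $\bigO(n^{-2}x^{\alpha-2}+j^{-N})$ on $[1/M,1]$ (each extra power of $1/n$ costing one power of $1/x$, with $x^{\alpha-2}$ the worst exponent when $\alpha<2$; this is the source of the $\hat\alpha$ in the final error). This is the step I expect to be the main obstacle, since it requires checking that the $x^{-1}$- and $x^{-2}$-type contributions from the several Temme orders conspire rather than accumulate. Shrinking $\delta$ so that $\mathcal H_\alpha(x;\vec t,\vec u)$ and $\Omega=e^{u_1+\cdots+u_m}$ stay away from $(-\infty,0]$ for $u_\ell$ near $x_\ell$, one concludes $\ln\!\big(1+\sum_\ell\omega_\ell F_{n,j,\ell}\big)=f_0(x)+\tfrac1n f_1(x)+\bigO(n^{-2}x^{\alpha-2}+j^{-N})$ uniformly for $x\in[1/M,1]$, with $f_0,f_1$ as in \eqref{def of f0 and f1}.

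Finally I would sum over $j$. Applying Lemma~\ref{lemma:Riemann sum NEW} on $[A,B]=[1/M,1]$ (so $a_n=n/M$, $b_n=n$, $B-A\to1$) to $f_0$ and to $\tfrac1n f_1$ gives $S_1=n\!\int_{1/M}^1 f_0+\tfrac{f_0(1/M)+f_0(1)}{2}+\int_{1/M}^1 f_1$ plus the explicit errors of \eqref{sum f asymp gap NEW} plus $\bigO(\sum_j j^{-N})$. I would then extend the integrals to $[0,1]$ via $\mathrm Q(\alpha,s)=1-\tfrac{s^\alpha}{\Gamma(\alpha+1)}+\bigO(s^{\alpha+1})$, whence $f_0(x)=\ln\Omega-\tfrac{\mathsf{T}_{0}}{\Omega\,\Gamma(\alpha+1)}x^\alpha+\bigO(x^{\min(2\alpha,\alpha+1)})$ and $f_1(x)=\tfrac{(1-\alpha)\mathsf{T}_{0}}{2\Omega\,\Gamma(\alpha)}x^{\alpha-1}+\bigO(x^\alpha)$: the piece $-n\!\int_0^{1/M}f_0$ produces $-\tfrac nM\ln\Omega+\tfrac{n\,\mathsf{T}_{0}}{\Omega\,\Gamma(\alpha+2)}M^{-\alpha-1}$, the boundary term gives $\tfrac{f_0(1)+\ln\Omega}{2}-\tfrac{\mathsf{T}_{0}}{2\Omega\,\Gamma(\alpha+1)}M^{-\alpha}$, and $-\int_0^{1/M}f_1$ gives $-\tfrac{(1-\alpha)\mathsf{T}_{0}}{2\Omega\,\Gamma(\alpha+1)}M^{-\alpha}$; the two $M^{-\alpha}$ terms combine into $-\tfrac{(2-\alpha)\mathsf{T}_{0}}{2\Omega\,\Gamma(\alpha+1)}M^{-\alpha}$, matching the claim. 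It then remains to tally the remaining errors — $\bigO((n/M)^{1-N})$, $\bigO(M^{1-\alpha}/n)$ and $\bigO(1/n)$ from the $f_0'(1/M)$ and $\sum_j\mathfrak m_{j,n}(f_0'')/n^2$ terms, $\bigO(M^{2-\alpha}/n^2)$ and $\bigO(1/n^2)$ from the $f_1$ analogues, $\bigO(n\,M^{-\min(2\alpha,\alpha+1)-1})$ from the $f_0$-integral remainder, and $\bigO(n)\cdot\bigO(n^{-2}x^{\alpha-2})$ summed $=\bigO((1+M^{1-\alpha})/n)$ — which, using $M\asymp n^{2/(2+\alpha)}$, all collapse to $\bigO\!\big(\tfrac{n}{M^{1+2\alpha}}+\tfrac{n}{M^{2+\alpha}}+\tfrac{1+M^{1-\alpha}}{n}\big)$. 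This last bookkeeping also explains the choice $M\asymp n^{2/(2+\alpha)}$: it is precisely the scale balancing the leftover small-$j$ contributions $\asymp nM^{-\alpha-1}$ and $M^{-\alpha}$ against the Riemann-sum error $\asymp M^{1-\alpha}/n$.
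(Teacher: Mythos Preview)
Your proposal is correct and follows essentially the same approach as the paper: apply Temme's expansion (Lemma~\ref{lemma:Temme}) with $N=4$, expand in $1/n$ with $x=j/n$ to obtain $F_{n,j,\ell}=\mathrm{Q}(\alpha,t_\ell x)+\tfrac{1}{n}x^{\alpha-1}\tfrac{t_\ell^\alpha e^{-t_\ell x}}{\Gamma(\alpha)}\tfrac{1-\alpha-t_\ell x}{2}+\bigO(n^{-2}x^{\alpha-2})$, take logs, sum via Lemma~\ref{lemma:Riemann sum NEW} on $[1/M,1]$, and finally extend the integrals to $[0,1]$ using the small-$x$ expansion of $\mathrm{Q}$. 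Your detailed account of how the $1/j$-singular pieces from the Stirling factor, the $d_1F_1$ term, and the $d_0$ expansion cancel to leave $\Sigma_1(x)$ is exactly the mechanism behind the paper's one-line formula for $F_{n,j,\ell}$, and your bookkeeping of the error terms matches the paper's.
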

\begin{proof}
By Lemma \ref{lemma:Temme}, since $n/M\to+\infty$, we have 
\begin{align*}
F_{n,j,\ell} = \frac{\Gamma(j+\alpha)}{\Gamma(j)}d_{0}(r_{\ell}^{2},\alpha)F_{0}(r_{\ell}^{2},j,\alpha) \bigg( 1 + \sum_{k=1}^{3} \frac{d_{k}(r_{\ell}^{2},\alpha)F_{k}(r_{\ell}^{2},j,\alpha)}{d_{0}(r_{\ell}^{2},\alpha)F_{0}(r_{\ell}^{2},j,\alpha)} + \bigO(j^{-4}) \bigg)
\end{align*}
as $n\to + \infty$ uniformly for $\ell \in \{1,\ldots,m\}$ and $j \in \{\frac{n}{M},\ldots,n\}$. Moreover, using \eqref{def Fk rec}, \eqref{def FK initial}, \eqref{def of dk} and \eqref{asymp gamma}, we get
\begin{align*}
F_{n,j,\ell} = \mathrm{Q}(\alpha, t_{\ell} j/n) + \frac{(j/n)^{\alpha}}{j} \frac{e^{-t_{\ell}j/n}t_{\ell}^{\alpha}}{\Gamma(\alpha)} \frac{1-\alpha - t_{\ell}j/n}{2} + \bigO\Big(\frac{1}{n^{2}}\frac{(j/n)^{\alpha}}{(j/n)^{2}}\Big), \qquad \mbox{as } n \to + \infty,
\end{align*}
uniformly for $\ell \in \{1,\ldots,m\}$ and $j \in \{\frac{n}{M},\ldots,n\}$. Hence, for small enough $\delta>0$,
\begin{align}\label{ln 1+F asymp}
\ln \bigg( 1+\sum_{\ell=1}^{m} \omega_{\ell} F_{n,j,\ell} \bigg) = f_{0}(j/n) + \frac{f_{1}(j/n)}{n}  + \bigO\Big(\frac{1}{n^{2}}\frac{(j/n)^{\alpha}}{(j/n)^{2}}\Big), \qquad \mbox{as } n \to + \infty,
\end{align}
uniformly for $j \in \{\frac{n}{M},\ldots,n\}$ and $u_{1} \in \{z \in \mathbb{C}: |z-x_{1}|\leq \delta\},\dots,u_{m} \in \{z \in \mathbb{C}: |z-x_{m}|\leq \delta\}$. Note that
\begin{align*}
\frac{1}{n^{2}} \sum_{j=\frac{n}{M}}^{n} \frac{(j/n)^{\alpha}}{(j/n)^{2}} = \bigO\bigg( \frac{1+M^{1-\alpha}}{n} \bigg).
\end{align*}
Furthermore, by Lemma \ref{lemma:Riemann sum NEW} with $A=\frac{1}{M}$, $B=1$,
\begin{align}
& \sum_{j=\frac{n}{M}}^{n} f_{0}(j/n) = n \int_{M^{-1}}^{1}f_{0}(x)dx + \frac{f_{0}(M^{-1})+f_{0}(1)}{2} + \bigO\bigg( \frac{M^{1-\alpha}+1}{n} \bigg), \label{sum f0} \\
& \frac{1}{n} \sum_{j=\frac{n}{M}}^{n} f_{1}(j/n) = \int_{M^{-1}}^{1}f_{1}(x)dx + \bigO\bigg( \frac{M^{1-\alpha}+1}{n} \bigg), \label{sum f1}
\end{align}
where, to estimate the $\bigO$-terms, we have used $n^{-1}f_{0}'(A) \lesssim n^{-1}A^{\alpha-1} = \bigO(\frac{M^{1-\alpha}}{n})$, $n^{-1}f_{0}'(B)=\bigO(n^{-1})$, $n^{-1}f_{1}(A) \lesssim n^{-1}A^{\alpha-1} = \bigO(\frac{M^{1-\alpha}}{n})$, and $n^{-1}f_{1}(B)=\bigO(n^{-1})$. Also, using \eqref{def of f0 and f1}--\eqref{def of Sigma0 and Sigma1}, as $n\to \infty$ we get
\begin{align*}
& n \int_{M^{-1}}^{1}f_{0}(x)dx = n \int_{0}^{1}f_{0}(x)dx - \frac{n}{M} \ln \Omega + \frac{\mathsf{T}_{0}}{\Omega \, \Gamma(\alpha+2)}\frac{n}{M^{1+\alpha}} + \bigO\Big( \frac{n}{M^{1+2\alpha}}+\frac{n}{M^{2+\alpha}} \Big), \\
& f_{0}(M^{-1}) = \ln \Omega - \frac{\mathsf{T}_{0}}{\Omega \, \Gamma(\alpha+1)}\frac{1}{M^{\alpha}} + \bigO\Big( \frac{1}{M^{1+\alpha}}+\frac{1}{M^{2\alpha}} \Big), \\
& \int_{M^{-1}}^{1}f_{1}(x)dx = \int_{0}^{1}f_{1}(x)dx - \frac{\mathsf{T}_{0}(1-\alpha)}{2\Omega \, \Gamma(\alpha+1)} \frac{1}{M^{\alpha}} + \bigO\Big( \frac{1}{M^{1+\alpha}}+\frac{1}{M^{2\alpha}} \Big).
\end{align*}
Substituting the above in \eqref{sum f0}--\eqref{sum f1} and then in \eqref{ln 1+F asymp} yields
\begin{align*}
S_{1} & = \sum_{j=\frac{n}{M}}^{n} f_{0}(j/n) + \sum_{j=\frac{n}{M}}^{n} \frac{f_{1}(j/n)}{n}  + \bigO\bigg( \frac{1+M^{1-\alpha}}{n} \bigg) \\
& = n \bigg\{ \int_{0}^{1}f_{0}(x)dx - \frac{1}{M} \ln \Omega + \frac{\mathsf{T}_{0}}{\Omega \, \Gamma(\alpha + 2)} \frac{1}{M^{\alpha+1}} \bigg\} + \int_{0}^{1} f_{1}(x)dx + \frac{f_{0}(1) + \ln \Omega}{2} \\
& \quad - \frac{(2-\alpha)\mathsf{T}_{0}}{2\Omega \, \Gamma(\alpha+1)} \frac{1}{M^{\alpha}} + \bigO\bigg( \frac{n}{M^{1+2\alpha}}+\frac{n}{M^{2+\alpha}} + \frac{1+M^{1-\alpha}}{n} \bigg)
\end{align*}
\end{proof}

\begin{proof}[Proof of Theorem \ref{thm:main thm hard}]
Combining Lemmas \ref{lemma: S0 hard} and \ref{lemma: S2km1 hard} yields
\begin{align*}
\ln \mathcal{E}_{n} = n \int_{0}^{1}f_{0}(x)dx + \int_{0}^{1}f_{1}(x)dx + \frac{f_{0}(1) - \ln \Omega}{2} + \bigO\bigg( \frac{n}{M^{1+2\alpha}}+\frac{n}{M^{2+\alpha}} + \frac{1+M^{1-\alpha}}{n} \bigg).
\end{align*}
The above $\bigO$-term can be rewritten as
\begin{align*}
\bigO\bigg( \frac{n}{M^{1+\alpha+\hat{\alpha}}} + \frac{M^{1-\hat{\alpha}}}{n} \bigg),
\end{align*}
where $\hat{\alpha} = \min\{\alpha,1\}$. Since $M \asymp n^{\frac{2}{2+\alpha}}$, we have 
\begin{align*}
\frac{n}{M^{1+\alpha+\hat{\alpha}}} \asymp  \frac{M^{1-\hat{\alpha}}}{n} \asymp n^{-\frac{2\hat{\alpha}+\alpha}{2+\alpha}},
\end{align*}
which finishes the proof of Theorem \ref{thm:main thm hard}.
\end{proof}

\paragraph{Acknowledgements.} CC acknowledges support from the Swedish Research Council, Grant No. 2021-04626. PM acknowledges support from the Magnusons fond, Grant No. MG2022-0014, and from the European Research Council, Grant No. 715539.

\footnotesize

\end{document}